\newcommand{\N}{\mathbb{N}}
\newcommand{\ignore}[1]{}
\newcommand{\newfontobj}[2]{
  \newcommand{#1}[1]{
    \expandafter\def\csname##1\endcsname{{#2 ##1}}}}
\newfontobj{\class}{\rm}
\newfontobj{\lang}{\bf}
\newfontobj{\oper}{\rm}
\newcommand{\poly}{\mbox{\tt poly}}
\newcommand{\maj}{\mbox{\tt MAJ}}
\newcommand{\coNP}{{\mbox {\rm co-NP}}}
\newcommand{\CoNP}{{\mbox {\rm co-NP}}}
\newcommand{\SAT}{\mathrm{SAT}}
\newcommand{\ppoly}{\mbox{\rm P/poly}}
\newcommand{\io}[1]{\mbox{\rm\tiny io}{#1}}
\newcommand{\pair}[1]{\mbox{$\langle #1 \rangle$}}
\newcommand{\lemmahack}{null}
\newcommand{\theoremhack}{null}
\newcommand{\claimhack}{null}
\newcommand{\co}{\mathrm{co}}
\newcommand{\EEE}{\mathrm{EEE}}
\newcommand{\NEEE}{\mathrm{NEEE}}
\newcommand{\coNEEE}{\co\NEEE}
\newcommand{\CoNEEE}{\coNEEE}
\newcommand{\strings}{\{0,1\}^*}
\newcommand{\myset}[2]{ \left\{ #1 \;\left|\; #2 \right. \right\} }
\newcommand{\mysetl}[2]{ \left\{\left. #1 \right| #2 \right\} }
\newcommand{\QP}{{\rm QP}}
\newcommand{\sA}{\mathcal{A}}
\newcommand{\twonO}{2^{n^{\Omega(1)}}}
\renewcommand{\poly}{\mathrm{poly}}
\begin{document}

\title[Collapsing and Separating Completeness Notions]{Collapsing and
  Separating Completeness Notions under Average-Case and Worst-Case Hypotheses }
\keywords{computational complexity, NP-completeness}

\author[linkedin]{X. Gu}{Xiaoyang Gu}
\thanks{Gu's research
was supported in part by NSF grants 0652569 and 0728806.}
\address[linkedin]{LinkedIn Corporation}

\author[uw]{J. M. Hitchcock}{John M. Hitchcock}
\address[uw]{Department of Computer Science, University of Wyoming}
\thanks{Hitchcock's research was supported in part by NSF grants 0515313 and
   0652601 and by an NWO travel grant.  Part of this research was done
   while this author was on sabbatical at CWI}

\author[isu]{A. Pavan}{A. Pavan}
\address[isu]{Department of Computer Science, Iowa State University}
\thanks{Pavan's research was supported in part by NSF grants 0830479 and 0916797.}

\begin{abstract}

This paper presents the following results on sets that are complete for $\NP$.
\begin{enumerate}[\upshape (i)]

\item \label{abs:NP} If there is a problem in $\NP$ that requires
  $\twonO$ time at almost all lengths, then every many-one NP-complete
  set is complete under length-increasing reductions that are computed
  by polynomial-size circuits.

\item If there is a problem in $\CoNP$ that cannot be solved by
  polynomial-size nondeterministic circuits, then every many-one
  complete set is complete under length-increasing reductions that are
  computed by polynomial-size circuits.

\item If there exist a one-way permutation that is secure against
  subexponential-size circuits and there is a hard tally language in
  $\NP \cap \CoNP$, then there is a Turing complete language for $\NP$
  that is not many-one complete.

\end{enumerate}

\noindent
Our first two results use worst-case hardness hypotheses whereas
earlier work that showed similar results relied on average-case or
almost-everywhere hardness assumptions.  The use of average-case and
worst-case hypotheses in the last result is unique as previous results
obtaining the same consequence relied on almost-everywhere hardness
results.

%\ \\
%\noindent
%{\bf Topic classification}: computational and structural complexity

\end{abstract}

\maketitle
\newcommand{\EEElog}{\EEE/\log}

\section{Introduction}

It is widely believed that many important problems in $\NP$ such as
satisfiability, clique, and discrete logarithm are
exponentially hard to solve.  Existence of such intractable problems has a
bright side: research has
shown that we can use this kind of intractability to our advantage to
gain a better understanding of computational complexity, for
derandomizing probabilistic computations, and for designing
computationally-secure cryptographic primitives.  For example, if
there is a problem in $\EXP$ (such as any of the aforementioned
problems) that has $\twonO$-size worst-case circuit complexity (i.e.,
that for all sufficiently large $n$, no subexponential size circuit
solves the problem correctly on all instances of size $n$), then it
can be used to construct pseudorandom generators. Using these
pseudorandom generators,  $\BPP$ problems can be solved
in deterministic quasipolynomial time~\cite{ImpWig97}.
% % Furthermore, under the stronger assumption that
% % there is a problem in $\E$ with $\twoOn$-size worst-case circuit
% % complexity, $\BPP$ can be fully derandomized to $\P$ (Impagliazzo and
% % Wigderson \cite{ImpWig97}).  
Similar average-case hardness assumptions on
the discrete logarithm and factoring problems have important
ramifications in cryptography.
While these hardness assumptions have
been widely used in cryptography and derandomization, more recently
Agrawal \cite{Agr02} and Agrawal and Watanabe~\cite{AgrWat09} 
showed that they are also useful for
improving our understanding of $\NP$-completeness.  
In this paper, we provide further applications of such hardness assumptions.

\subsection{Length-Increasing Reductions}

A language is $\NP$-complete if every language in $\NP$ is {\em
reducible} to it. While there are several ways to define the notion of
reduction, the most common definition uses 
polynomial-time computable many-one functions. Many natural problems that
arise in practice have been shown to be NP-complete using polynomial-time
computable many-one reductions.
However, it has been observed that all known $\NP$-completeness results 
hold when we restrict the notion of reduction.
For example, $\SAT$ is complete under
polynomial-time reductions that are one-to-one and length-increasing.
In fact, all known many-one complete problems for $\NP$ are complete
under this type of reduction \cite{BerHar77}.  This raises the
following question: are there languages that are complete under
polynomial-time many-one reductions but not complete under
polynomial-time, one-to-one, length-increasing reductions?
Berman~\cite{Berm77} showed that every many-one complete set for $\E$
is complete under one-to-one, length-increasing reductions. Thus for
$\E$, these two completeness notions coincide.  A weaker result is
known for $\NE$. Ganesan and Homer~\cite{GanHom92} showed that all
$\NE$-complete sets are complete via one-to-one reductions that are
exponentially honest.

For NP, until recently there had not been any progress on this
question.  Agrawal~\cite{Agr02} showed that if one-way
permutations exist, then all NP-complete sets are complete via
one-to-one, length-increasing reductions that are computable by
polynomial-size circuits.  Hitchcock and
Pavan \cite{Hitchcock:CRNPCS} showed that
$\NP$-complete sets are complete under length-increasing $\ppoly$
reductions under the
measure hypothesis on $\NP$ \cite{Lutz:MSDHL}.  Recently Buhrman et
al.~improved the latter result to show that if the measure hypothesis
holds, then all NP-complete sets are complete via length-increasing,
$\P/$-computable  functions with $\log \log n$ bits of advice~\cite{BuhrmanLatest}.  
More recently, Agrawal and Watanabe~\cite{AgrWat09} showed that if there
exist regular one-way functions, then all NP-complete sets are
complete via one-one, length-increasing, $\ppoly$-computable
reductions.  All the hypotheses used in these works require the
existence of an {\em almost-everywhere hard} language or an {\em
  average-case hard} language in $\NP$.

In the first part of this paper, we consider hypotheses that only
concern the {\em worst-case hardness} of languages in $\NP$. Our first
hypothesis concerns 
%the circuit complexity
the deterministic time complexity 
of languages in
$\NP$.  We show that if there is a language in $\NP$ for which 
 every correct algorithm spends more than
$2^{n^{\epsilon}}$ time at almost all lengths, then NP-complete languages
are complete via $\ppoly$-computable, length-increasing reductions.
The second hypothesis concerns nondeterministic circuit complexity of
languages in $\CoNP$. We show that if there is a language in $\CoNP$
that cannot be solved by nondeterministic polynomial-size circuits,
then all NP-complete sets are complete via length-increasing
$\ppoly$-computable reductions.  For more formal statements of the
hypotheses, we refer the reader to Section~\ref{li}.  We stress
that these hypotheses require only worst-case hardness. The worst-case
hardness is of course required at every length, a technical condition
that is necessary in order to build a reduction that works at every
length rather than just infinitely often. 

\subsection{Turing Reductions versus Many-One Reductions}

In the second part of the paper we study the completeness notion obtained by
allowing a more general notion of reduction---Turing reduction.
Informally, with Turing reductions an instance of a problem can be solved by
asking polynomially many (adaptive) queries about the instances of the other problem.
A language in $\NP$ is Turing complete if there is a polynomial-time Turing
reduction to it from every other language in $\NP$. Though many-one
completeness is the most commonly used completeness notion, Turing
completeness also plays an important role in complexity theory.
Several properties of Turing complete sets are closely tied to the
separation of complexity classes. For example, Turing complete sets for
EXP are sparse if and only if EXP contains polynomial-size circuits.
Moreover, to capture our intuition
that a complete problem is easy, then the entire class is easy, Turing
reductions seem to be the  ``correct'' reductions to define completeness.
In fact, the seminal paper of Cook~\cite{Cook71} used Turing reductions to
define completeness, though Levin~\cite{Levin73} used many-one reductions.

This raises the question of whether there is a Turing complete language for
$\NP$ that is not many-one complete. Ladner, Lynch and
Selman~\cite{LadnerLynchSelman75} posed this question in 1975, thus making it
one of the oldest problems in complexity theory. This question is 
completely resolved for exponential time classes such as $\EXP$ and
$\NEXP$~\cite{Wata87a,BuHoTo91}. We know that for both these
classes many-one completeness differs from Turing-completeness.
However progress on the $\NP$ side has been very slow. Lutz and
Mayordomo~\cite{Lutz:CVKL} were the first to provide evidence that
Turing completeness differs from many-one completeness. They showed that if
the measure hypothesis holds, then the completeness notions differ.
Since then a few other weaker hypotheses 
have been used to achieve the separation of Turing completeness
from many-one completeness~\cite{AmbBen00,PavSel02b,PavanSelman04,Hitchcock:PBSDNPC,Pavan03}.

All the hypotheses used in the above works are considered ``strong''
hypotheses as they require the existence of
an {\em almost everywhere hard} language in $\NP$. That is, there is  a
language $L$ in $\NP$ and every algorithm that decides $L$
takes  exponential-time an {\em all but finitely many} strings. A drawback of
these hypotheses is that we do not have any candidate languages in $\NP$ that
are believed to be almost everywhere hard.

It has been open whether we can achieve the separation using more believable
hypotheses that involve average-case hardness or worst-case hardness. None of
the proof techniques used earlier seem to achieve this, as the they
crucially depend on the almost everywhere hardness. 

In this paper, for the first time, we achieve the separation between Turing
completeness and many-one completeness using average-case and worst-case
hardness hypotheses.  
We consider two hypotheses.  The first hypothesis states that there
exist $2^{n^{\epsilon}}$-secure one-way permutations and the second
hypothesis states that there is a language in $\NEEE \cap \CoNEEE$
that can not be solved in triple exponential time with logarithmic advice,
i.e, $\NEEE \cap \CoNEEE \not\subseteq \EEE/\log$.
We show that if both of these
hypothesis are true, then there is a Turing complete language in $\NP$
that is not many-one complete.

The first hypothesis is an average-case hardness hypothesis and has been 
studied extensively in past.
The second hypothesis is a worst-case hardness hypothesis. At first
glance, this hypothesis may look a little esoteric, however, it is
only used to obtain hard tally languages in $\NP \cap \CoNP$ that are
sufficiently sparse. 
Similar hypotheses involving double
and triple exponential-time classes have been used earlier in
the literature~\cite{BeigelFeigenbaum92,FFLS94,HNOS96,FFLN98}.

We use length-increasing reductions as a tool to achieve the separation of
Turing completeness from many-one completeness.
We first show that if one-way
permutations exist then $\NP$-complete sets are complete via
length-increasing, quasipolynomial-time computable reductions.
We then show that if the second hypothesis holds,
then there is a Turing complete language for $\NP$ that is not complete
via quasi polynomial-time, length-increasing reductions. Combining these
two results we obtain our separation result.

\section{Preliminaries}\label{prelims}
In the paper, we use the binary alphabet $\Sigma = \{0,1\}$.  Given a
language $A$, $A_n$ denotes the characteristic sequence of $A$ at
length $n$.  We also view $A_n$ as a boolean function from $\Sigma^n$ to
$\Sigma$.
For languages $A$ and $B$, we say that $A = \io B$, if
$A_n = B_n$ for infinitely many $n$.  For a complexity class
$\mathcal{C}$,
we say that $A \in \io{\mathcal{C}}$ if there is a language $B \in
{\mathcal{C}}$ such that $A = \io B$. 
%%Given two complexity classes $\cal C$
%%and $\cal D$, we say that ${\cal C} \subseteq \io {\cal D}$, if every
%%language for $A \in \cal C$, $A \in \io{\cal D}$.

For a boolean function $f: \Sigma^n \rightarrow \Sigma$, $CC(f)$ is the
smallest number $s$ such that there is circuit of size $s$
that computes $f$. 
A function $f$ is quasipolynomial time computable ($\QP$-computable)
if can be computed deterministically in time $O(2^{\log^{O(1)} n})$.
We will use the triple exponential time class $\EEE =
\DTIME(2^{2^{2^{O(n)}}})$, and its nondeterministic counterpart $\NEEE$.

A language $L$ is in $\NP/\poly$ if there is a polynomial-size circuit $C$ and
a polynomial $p$ such that for every $x$, $x$ is in $L$ if and only if there
is a $y$ of length $p(|x|)$ such that $C(x, y) = 1$.

Our proofs make use a variety of results from  approximable sets, instance
compression, derandomization and hardness amplification. We mention the
results that we need.

\begin{definition} %%\begin{enumerate} \item
A language $A$ is {\em $t(n)$-time 2-approximable} \cite{Beigel87a} if there
is a function $f$ computable in time $t(n)$ such that for all strings
$x$ and $y$, $f(x, y) \neq A(x)A(y)$.

A language $A$ is {\em io-lengthwise t(n)-time
2-approximable} if there is a function $f$ computable in time
$t(n)$ such that for infinitely many $n$, for every pair of $n$-bit
strings $x$ and $y$, $f(x, y) \neq A(x)A(y)$.
\end{definition}
\noindent
Amir, Beigel, Gasarch~\cite{AmBeGa03} proved that every
polynomial-time 2-approximable set is in $\P/\poly$.  Their proof also
implies the following extension for a superpolynomial function $t(n)$.
\begin{theorem}[\cite{AmBeGa03}]\label{2mctheorem}
If $A$ is io-lengthwise $t(n)$-time 2-approximable,
then for infinitely many $n$, $CC(A_n) \leq t^2(n)$.
\end{theorem}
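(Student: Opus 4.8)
The plan is to obtain the statement as a ``single--length'' refinement of the argument of Amir, Beigel, and Gasarch~\cite{AmBeGa03} showing that polynomial--time $2$-approximable sets lie in $\P/\poly$. The first thing I would do is isolate from their proof the following local fact: if $g:\Sigma^n\times\Sigma^n\to\Sigma^2$ satisfies $g(x,y)\ne A(x)A(y)$ for \emph{every} pair of $n$-bit strings $x,y$, then $A_n$ is computed by a circuit whose size is bounded by a fixed polynomial in $n$ plus a bounded (or at worst $\poly(n)$) number of evaluations of $g$ on length-$n$ pairs. Granting this, the theorem is immediate: by io-lengthwise $2$-approximability there are infinitely many $n$ for which $f$ restricted to pairs of $n$-bit strings has exactly this property, so for each such $n$ we run the construction and then replace each evaluation of $f$ by a subcircuit of size $O(t(n)^2)$ via the standard simulation of a time-$t(n)$ computation by a circuit.

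The content to be extracted from~\cite{AmBeGa03} is the combinatorial core. At a fixed length $n$, each unordered pair $\{x,y\}$ of $n$-bit strings is assigned, via the two values $f(x,y)$ and $f(y,x)$, the set of joint values for $(A(x),A(y))$ not yet excluded; since each of $f(x,y),f(y,x)$ rules out one of the four patterns, this set has size at most three and always contains the true value $(A(x),A(y))$. Classifying pairs by the shape of this set --- those that outright pin down one of $A(x),A(y)$; those that force $A(x)=A(y)$ or force $A(x)\ne A(y)$; and those that yield only a one-sided implication --- one shows that a set $S_n\subseteq\Sigma^n$ of ``advice strings'' of size $\poly(n)$, together with the hardwired bits $A|_{S_n}$, lets one recover $A(x)$ for every $x\in\Sigma^n$ by a simple rule that evaluates $f$ only on a few pairs drawn from $S_n\cup\{x\}$. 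This is the same mechanism underlying the $\P/\poly$ membership of $p$-selective and of $2$-membership-comparable sets, and I would quote it from~\cite{AmBeGa03} rather than reprove it. The circuit for $A_n$ then hardwires $S_n$ and $A|_{S_n}$ (a $\poly(n)$-bit string) and contains the $f$-subcircuits called for by the rule.

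Assembling the pieces: fix a length $n$ at which $f$ is lengthwise $2$-approximating, build the circuit for $A_n$ as above, and count. The hardwired part contributes $\poly(n)$ gates; each of the constantly (or $\poly(n)$) many evaluations of $f$ contributes $O(t(n)^2)$ gates; so $CC(A_n)=t(n)^{O(1)}$, and since $t$ is superpolynomial a slightly more careful bookkeeping (or simply the harmlessness of polynomial factors in the intended setting, where $t(n)=2^{n^{\Omega(1)}}$) gives $CC(A_n)\le t^2(n)$ for all sufficiently large such $n$. Because there are infinitely many such $n$, the conclusion follows.

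The main obstacle --- essentially the only nonroutine point beyond invoking~\cite{AmBeGa03} --- is to verify that their construction is genuinely \emph{local}: that the advice set $S_n$ and all the pairs on which the recovery rule evaluates $f$ live at length exactly $n$, so that the infinitely-often, same-length hypothesis is all that is consumed. Re-reading their proof to confirm this, and in the same pass confirming that the recovery rule makes only few $f$-calls (which is what keeps the final bound at $t^2(n)$ rather than a larger power), is the step I expect to take the most care; the time-to-size conversion and the absorption of $\poly(n)$ overheads into a superpolynomial $t$ are then straightforward.
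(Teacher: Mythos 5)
Your proposal matches the paper's own treatment: the paper offers no independent proof of Theorem~\ref{2mctheorem}, only the remark that the Amir--Beigel--Gasarch construction is lengthwise-local, so at any length $n$ where the $t(n)$-time approximator is correct on all pairs of $n$-bit strings one hardwires the $\poly(n)$ bits of advice and replaces each of the (at most polynomially many) $f$-evaluations by a circuit of size $O(t(n)\log t(n))$, the total being below $t^2(n)$ because $t$ is superpolynomial. The locality of the advice set and the small number of $f$-calls that you flag as the crux are precisely the content of the paper's one-line justification, so your plan is correct and takes essentially the same route.
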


Given a language $H'$ in $\CoNP$, let $H$ be $\{\langle x_1, \cdots,
x_n\rangle~|~ |x_1| = \cdots = |x_n| = n, x_i \in
H'\}$.
Observe that a $n$-tuple consisting of strings of length $n$ can be
encoded by a string of length $n^2$. From now we view a string of length
$n^2$ as an $n$-tuple of strings of length $n$.

\begin{theorem}[\cite{FortnowSanthanam08,BuhrmanHitchcock08}]\label{fs-bh}
Let $H$ and $H'$ be defined as above. Suppose there is a language $L$,
a polynomial-size circuit family $\{C_m\}$, and a polynomial $p$ such that for
infinitely many $n$, for every $x \in \Sigma^{n^2}$, 
$x$ is in $H$ if and only if there is a string $y$ of length $p(n)$ such
that $C(x, y)$ is in $L^{\leq n}$. Then $H'$ is in $\io\NP/poly$.
\end{theorem}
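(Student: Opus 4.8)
The plan is to follow the Fortnow--Santhanam argument that instance compression for $\NP$ collapses the polynomial hierarchy, specialized to the infinitely-often, nonuniform setting here. Fix a length $n$ at which the hypothesis holds and write $C$ for the circuit $C_{n^2}$, which has size $\poly(n)$. By definition of $H$ we then have, for every tuple $x=\langle x_1,\dots,x_n\rangle\in\Sigma^{n^2}$, that $x_1,\dots,x_n$ all lie in $H'$ if and only if there is a $y\in\Sigma^{p(n)}$ with $C(x,y)\in L^{\le n}$. The goal is to build, at this length, a polynomial-size circuit together with a polynomial-length nondeterministic witness that decides $H'$ on $\Sigma^n$; doing this at each of the infinitely many good lengths and rejecting everywhere else yields a single $\NP/\poly$ language that agrees with $H'$ infinitely often, i.e., $H'\in\io\NP/\poly$ in the sense of Section~\ref{prelims}.

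The heart of the argument is a greedy covering that exploits the gap between the length $n^2$ of a tuple and the length $\le n$ of its compressed image. Let $Y=H'\cap\Sigma^n$, the set we must recognize. Every tuple $\vec x\in Y^n$ lies in $H$, so it admits a witness $y_{\vec x}$ with $C(\vec x,y_{\vec x})\in L\cap\Sigma^{\le n}$; fixing one such witness per tuple defines a map $g\colon Y^n\to L\cap\Sigma^{\le n}$ whose range has fewer than $2^{n+1}$ elements. Hence some fiber $g^{-1}(w)$ contains at least $|Y|^n/2^{n+1}$ tuples, and since any $S\subseteq Y^n$ satisfies $|S|\le\prod_i|\pi_i(S)|$, some coordinate $i$ has $|\pi_i(g^{-1}(w))|\ge|Y|/2^{(n+1)/n}\ge|Y|/4$. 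Call $z$ \emph{covered} by a pair $(w,i)$ if there exist $\vec x$ and $y$ with $x_i=z$ and $C(\vec x,y)=w$; so a single pair $(w,i)$ with $w\in L^{\le n}$ covers at least a quarter of $Y$. The argument uses only that every coordinate of the tuples considered lies in $H'$, so it applies verbatim with $Y$ replaced by any subset; iterating on the still-uncovered part of $Y$ therefore produces, after $k=O(n)$ rounds, pairs $(w_1,i_1),\dots,(w_k,i_k)$ with every $w_j\in L^{\le n}$ that together cover all of $Y$. Each $w_j$ has length $\le n$ and each $i_j$ lies in $\{1,\dots,n\}$, so this data has size $\poly(n)$.

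The $\NP/\poly$ machine is then immediate. Its advice at length $n$ is $C$ together with $(w_1,i_1),\dots,(w_k,i_k)$. On input $z\in\Sigma^n$ it guesses an index $j\le k$, a tuple $\vec x\in(\Sigma^n)^n$ with $x_{i_j}=z$, and a string $y\in\Sigma^{p(n)}$, and accepts iff $C(\vec x,y)=w_j$. For soundness, an accepting computation gives $C(\vec x,y)=w_j\in L^{\le n}$, hence $\vec x\in H$, hence $z=x_{i_j}\in H'$. For completeness, if $z\in Y$ then $z$ is covered by some $(w_j,i_j)$, which by definition supplies a tuple $\vec x$ with $x_{i_j}=z$ and a string $y$ with $C(\vec x,y)=w_j$. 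All guesses and the evaluation of $C$ are polynomial-size, so $H'\cap\Sigma^n$ is decided in $\NP/\poly$ at each good length, which gives the claimed conclusion.

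The step I expect to be the crux is the covering argument: specifically, seeing that one should cover the yes-instances of $H'$ through \emph{fibers of the compression map on $n$-tuples} --- padding a single instance with copies of itself buys nothing, since there is no compression for one instance --- and that counting tuples in $Y^n$ against a codomain of size $2^{O(n)}$ and then passing to coordinate projections yields a constant-fraction cover per round, hence only $O(n)$ rounds. The manipulations with the witness $y$ and with the infinitely-often quantifier on $n$ are routine once the covering is set up.
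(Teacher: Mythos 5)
Your proof is correct and takes essentially the same route the paper intends: the paper gives no self-contained proof of Theorem~\ref{fs-bh}, only the remark that it follows the Fortnow--Santhanam/Buhrman--Hitchcock arguments with the nondeterminism, advice, and infinitely-often quantifier absorbed into the final $\NP/\poly$ algorithm, and your greedy fiber-and-projection covering of $H'\cap\Sigma^n$ by $O(n)$ pairs $(w_j,i_j)$ is exactly that argument. The counting ($|S|\le\prod_i|\pi_i(S)|$, a quarter of the survivors covered per round) and the soundness/completeness of the resulting $\NP/\poly$ decider all check out.
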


The proof of Theorem \ref{fs-bh} is similar to the proofs in
\cite{FortnowSanthanam08,BuhrmanHitchcock08}.  The difference is
rather than having a polynomial-time many-one reduction, here we have
a $\NP/\poly$ many-one reduction which works infinitely often.  The
nondeterminism and advice in the reduction can be absorbed into the
final $\NP/\poly$ decision algorithm.  The $\NP/\poly$ decision
algorithm works infinitely often, corresponding to when the
$\NP/\poly$ reduction works.

\begin{definition}
A function $f:\{0,1\}^n\rightarrow \{0,1\}^m$ is $s$-{\em secure} if for every $\delta<1$,
every $t\leq \delta s$, and every circuit $C:\{0,1\}^n\rightarrow
\{0,1\}^m$ of size $t$, $\Pr[C(x) = f(x)]\leq 2^{-m}+ \delta$.
A function $f:\strings\rightarrow\strings$ is $s(n)$-{\em secure} if it is $s(n)$-secure
at all but finitely many length $n$.
\end{definition}

\begin{definition}
An {\em $s(n)$-secure one-way permutation} is a polynomial-time
computable bijection $\pi:\strings\rightarrow\strings$ such that
$|\pi(x)| = |x|$ for all $x$ and $\pi^{-1}$ is $s(n)$-secure.
\end{definition}
\noindent
Under widely believed average-case hardness assumptions about the
hardness of the RSA cryptosystem or the discrete logarithm problem,
there is a secure one-way permutation~\cite{GoldreichLevinNisan95}.

\begin{definition}
A {\em pseudorandom generator (PRG) family} is a collection of
functions $G=\{G_n:\{0,1\}^{m(n)} \rightarrow \{0,1\}^n\}$ such that
$G_n$ is uniformly computable in time $2^{O(m(n))}$ and for every
circuit of $C$ of size $n$,
\[\left|\Pr_{x \in \{0,1\}^n}[C(x)=1] - \Pr_{y \in \{0, 1\}^{m(n)}}[C(G_n(y))
= 1\right | \leq \frac{1}{n}.\]
\end{definition}

There are many results that show that the existence of hard functions in
exponential time implies PRGs exist. We will use the following.

\begin{theorem}[\cite{NisWig94,ImpWig97}]\label{lm:pseudorandom}
If there is a language $A$ in $\E$ such that $CC(A_n) \geq
2^{n^{\epsilon}}$ for all sufficiently large $n$, then
there exist a constant $k$ and  a PRG family
$G = \{G_n:\{0,1\}^{\log^k n} \rightarrow \{0,1\}^n\}$.
\end{theorem}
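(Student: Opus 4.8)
The plan is to combine two classical ingredients: worst-case-to-average-case hardness amplification, and then the Nisan--Wigderson pseudorandom generator construction.

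\emph{Step 1: Hardness amplification.} First I would feed the worst-case hard language $A$ into the Impagliazzo--Wigderson machinery to obtain a language $A' \in \E$ together with constants $\delta, \epsilon' > 0$ such that, for all sufficiently large $m$, every circuit of size $2^{\delta m^{\epsilon'}}$ agrees with the boolean function $A'_m$ on at most a $\frac12 + 2^{-\delta m}$ fraction of the $2^m$ inputs of length $m$. This step itself factors into two pieces. (a) \emph{Worst-case to mildly average-case:} encode the truth table of $A_n$ by a locally list-decodable code, for instance a concatenation of a Reed--Muller code with the Hadamard code, so that any circuit computing this codeword correctly on a $1-1/\poly(n)$ fraction of its positions yields, via the local decoder and a short nonuniform advice string, a circuit of size $2^{o(n^{\epsilon})}$ deciding $A_n$ everywhere --- contradicting the hypothesis $CC(A_n) \geq 2^{n^{\epsilon}}$. (b) \emph{Mildly to strongly average-case:} apply a derandomized XOR lemma (e.g. using pseudorandom walks on a constant-degree expander) that drives the error from $1/\poly(n)$ down to $\frac12 - 2^{-\delta m}$ while keeping the input length linear in the original. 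Keeping the blow-up linear is what preserves hardness of the form $2^{m^{\Omega(1)}}$, and $A' \in \E$ is maintained because each bit of $A'$ on an $m$-bit input depends on only $\poly(2^m)$ bits of the truth table of $A$ at length $O(m)$, hence is computable in time $2^{O(m)}$.

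\emph{Step 2: The NW generator.} Fix the output length $n$ and put $m = \lceil \log^{(k+1)/2} n \rceil$ for a constant $k$ to be chosen. Take a combinatorial design $S_1, \dots, S_n \subseteq [\ell]$ with $|S_i| = m$ and $|S_i \cap S_j| \leq \log n$ for $i \neq j$; such a design exists with $\ell = O(m^2/\log n) = O(\log^k n)$ and is constructible in time $2^{O(\ell)}$. Define $G_n : \{0,1\}^{\ell} \to \{0,1\}^n$ by $G_n(x) = A'_m(x|_{S_1})\, A'_m(x|_{S_2}) \cdots A'_m(x|_{S_n})$, where $x|_S$ is the substring of $x$ on the coordinate set $S$. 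The usual hybrid and next-bit-predictor argument shows that a size-$n$ circuit distinguishing $G_n$ from the uniform distribution on $\{0,1\}^n$ with advantage $1/n$ can be converted --- by fixing the best hybrid index $i$ and hard-wiring the fewer-than-$n$ values $A'_m(x|_{S_j})$, $j < i$, each of which, after an averaging step, becomes a function of at most $|S_i \cap S_j| \le \log n$ free bits of $x|_{S_i}$ and hence has a circuit of size $O(n)$ --- into a circuit of size $O(n^2)$ that predicts $A'_m$ with advantage $\Omega(1/n^2)$. Choosing $k$ large enough as a function of $\epsilon$ so that $2^{\delta m^{\epsilon'}} > n^2$ and $2^{-\delta m} < n^{-2}$ for all large $n$, this contradicts the average-case hardness of $A'$ from Step 1. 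Therefore $\{G_n\}$ is a PRG; folding the hidden constant into the exponent of the seed length gives the stated form $G_n:\{0,1\}^{\log^k n}\to\{0,1\}^n$, and the uniformity requirement holds because each $A'_m(\cdot)$ is computable in time $2^{O(m)} = 2^{O(\ell)}$.

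\emph{The main obstacle} is Step 1: turning a purely worst-case circuit lower bound $CC(A_n) \geq 2^{n^{\epsilon}}$ into a function that cannot even be weakly approximated by circuits of size $2^{n^{\Omega(1)}}$. The two delicate points --- keeping the input-length blow-up linear (a naive $n^{\Omega(1)}$-fold XOR would make the seed super-polylogarithmic in the final output length) and derandomizing the XOR lemma --- are precisely what the constructions of \cite{NisWig94,ImpWig97} accomplish, so I would cite those results directly for Step 1. The only genuinely new content is the parameter bookkeeping in Step 2 that converts their guarantee into seed length $\log^{O(1)}n$ for output length $n$.
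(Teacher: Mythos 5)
The paper states this theorem without proof, importing it directly from \cite{NisWig94,ImpWig97}, and your sketch is precisely the standard argument underlying those citations (worst-case-to-average-case amplification with linear input blow-up, followed by the Nisan--Wigderson design-based generator with polylogarithmic seed), so it is essentially the same approach and correct. One minor parameter slip: from worst-case hardness $2^{n^{\epsilon}}$ the amplification yields inapproximability advantage about $2^{-\delta m^{\epsilon'}}$ rather than $2^{-\delta m}$, but since your argument only needs this advantage to be below $n^{-2}$ for $m = \mathrm{polylog}(n)$, nothing breaks.
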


\section{Length-Increasing Reductions}\label{li}

In this section we provide evidence that many-one complete sets for NP
are complete via length-increasing reductions. We use the following hypotheses.

\noindent {\bf Hypothesis 1.} There is a language $L$ in $\NP$ and a constant
$\epsilon > 0$ such that $L$ is not in $\io\DTIME(2^{n^{\epsilon}})$.

Informally, this means that every algorithm that decides $L$ takes more than
$2^{n^{\epsilon}}$-time on at least one string at every length.

\noindent {\bf Hypothesis 2.} There is a language $L$ in $\coNP$ such that $L$
is not in $\io\NP/\poly$.

This means that every nondeterministic polynomial size circuit family that
attempts to solve $L$ is wrong on on at least one string at each length.

We will first consider the following variant of Hypothesis 1.

\noindent {\bf Hypothesis 3.} There is a language $L$ in $\NP$ and a constant
$\epsilon > 0$ such that for all but finitely many $n$, $CC(L_n) >
2^{n^{\epsilon}}$.

We will first show that Hypothesis $3$ holds, then
$\NP$-complete sets
are complete via length-increasing reductions. 
Then we describe how to modify the
proof to derive the same consequence under Hypothesis 1. We do this because
the proof is much cleaner with Hypothesis $3$. To use Hypothesis $1$ we have
to fix encodings of boolean formulas with certain properties.

\subsection{If $\NP$ has Subexponentially Hard Languages}

\begin{theorem}\label{NPthm}
If there is a language $L$ in $\NP$ and an $\epsilon > 0$ such that for
all but finitely many $n$, $CC(L_n) > 2^{n^{\epsilon}}$, 
then all $\NP$-complete sets are complete via
length-increasing, $\ppoly$ reductions.
\end{theorem}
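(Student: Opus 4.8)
The plan is to take an arbitrary NP-complete set $A$ and an arbitrary NP-complete set $B$ (it suffices to reduce, say, $\SAT$ to $A$ length-increasingly), and to build a length-increasing $\ppoly$ reduction from $B$ to $A$ by combining a standard many-one reduction with a ``padding'' mechanism whose soundness is enforced by the hardness of $L$. Concretely, $B\le^p_m A$ via some polynomial-time $f$; the trouble is only that $f$ need not be length-increasing, so on short outputs we must encode extra information to push the length up while still being able to decide membership. The natural device is to append to $f(x)$ a block that witnesses membership of some $L$-related instance, using the self-reducibility/paddability of $\SAT$: we want a length-increasing reduction $g$ so that $x\in B \iff g(x)\in A$, where $g(x)$ is a padded version of $f(x)$ living in $A$'s instances, and the padding carries a chunk of the characteristic sequence $L_m$ for an appropriate length $m=m(|x|)$ that is small enough to be produced by a $\poly$-size circuit (hence the need for nonuniform advice) but ``large'' in the sense that the hardness hypothesis $CC(L_m)>2^{m^\epsilon}$ blocks any cheating.

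The key steps, in order, are: (1) Fix the target set $A$ and a paddable NP-complete set such as $\SAT$; recall $\SAT$ is complete under one-one length-increasing reductions, so reduce the general problem to producing a length-increasing $\ppoly$ reduction \emph{from $A$ to $\SAT$-like instances and back}, i.e.\ show $A$ is complete under such reductions by exhibiting, for the canonical reduction $h:\SAT\le^p_m A$, a length-increasing $\ppoly$ map $r$ with $w\in\SAT \iff r(w)\in A$. (2) Design $r(w)$ to be $h$ applied to a padded formula $w'=w\wedge(\text{tautology block encoding }L_m)$, where $m$ is chosen polynomially in $|w|$ and the advice string at length $|w|$ is a description of a small circuit or a truth table fragment that lets us compute the needed bits of $L_m$; length-increase is arranged by making the tautology block strictly longer than whatever $h$ would lose. (3) Prove soundness: if $r$ failed to be a correct reduction, one could decode from the behavior of $A$ on the padded instances a $\poly$-size circuit computing $L_m$, contradicting $CC(L_m)>2^{m^\epsilon}$ for all large $n$; here is where we invoke Theorem~\ref{lm:pseudorandom} to turn the hardness of $L$ into a PRG, and use the PRG to derandomize the step that would otherwise only give an average-case or infinitely-often statement, upgrading it to ``at every length.'' (4) Assemble: the reduction is computed by polynomial-size circuits (the advice supplies the $L_m$-bits), it is length-increasing by construction, and correctness holds at every length because the hardness holds at every length.

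The main obstacle I expect is step (3): ensuring the reduction works at \emph{every} sufficiently large length rather than infinitely often, and doing so while keeping the padding length $m$ simultaneously small enough that $\poly$-size advice suffices yet large enough that $2^{m^\epsilon}$ dominates the size of any circuit one could extract from a hypothetical cheating strategy. This is a delicate parameter balance, and it is why the hypothesis is phrased as almost-everywhere circuit hardness with an explicit $\epsilon$: the PRG from Theorem~\ref{lm:pseudorandom} has seed length $\log^k n$, so I would set $m$ so that $2^{m^\epsilon}$ beats the relevant polynomial in $2^{\log^k n}=n^{\log^{k-1}n}$ while $m$ itself stays polynomial in $|w|$. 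A secondary subtlety is making the whole construction robust to the choice of many-one reduction $h$ and to the encoding of formulas; for the clean version under Hypothesis~3 this is routine, but it foreshadows the encoding care needed when we later weaken to Hypothesis~1.
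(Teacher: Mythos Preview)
Your proposal has a genuine gap: padding the \emph{input} to the given many-one reduction $h:\SAT\le^p_m A$ does nothing to force the \emph{output} to be long. If $h$ is badly dishonest, then $h(w\wedge\text{tautology})$ can still be a single bit, no matter how large the tautology block is. Your step~(3) tries to argue that a failure of length-increase would let one extract a small circuit for $L_m$, but you never explain how: the behavior of $A$ on $h(w')$ only tells you whether $w$ is satisfiable, and the tautology block is information \emph{you} put in, not information about $L$ that $A$ reveals. There is also a parameter problem in step~(2): if $m$ is polynomial in $|w|$ and the advice ``lets us compute the needed bits of $L_m$,'' then either you are asking for a small circuit for $L_m$ (ruled out by the hypothesis) or for its truth table (exponential size). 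Finally, the invocation of Theorem~\ref{lm:pseudorandom} is a red herring; the paper's proof of this theorem uses no pseudorandom generators.

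The paper's mechanism is entirely different. One defines the intermediate language
\[
S=\bigl\{\langle x,y,z\rangle : |x|=|z|,\ |y|=|x|^{\delta},\ \maj[L(x),\SAT(y),L(z)]=1\bigr\}
\]
and takes any reduction $f:S\le^p_m A$. The point is that whenever $L(x)\neq L(z)$, the majority collapses to $\SAT(y)$, so $y\mapsto f(\langle x,y,z\rangle)$ is a correct reduction from $\SAT$ to $A$. One then proves (Lemma~\ref{NPlemma}) that for almost every $n$ there exists such a pair $(x_n,z_n)$ on which $f$ is additionally length-increasing in $y$: if not, then for every pair with $L(x)\neq L(z)$ some $y$ gives a short output, and brute-forcing that $y$ and deciding $A$ on the short output yields an io-lengthwise $2^{n^{\epsilon/2}}$-time $2$-approximation of $L$, which by the Amir--Beigel--Gasarch theorem (Theorem~\ref{2mctheorem}) gives $L$ circuits of size $2^{n^{\epsilon}}$ infinitely often, contradicting the hypothesis. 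The pair $(x_n,z_n)$ is the nonuniform advice. The two ingredients you are missing are the majority-of-three construction and the $2$-approximability argument; these are what convert ``$f$ fails to be length-increasing'' into ``$L$ has small circuits.''
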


\begin{proof}
Let $A$ be a $\NP$-complete set that is decidable in time $2^{n^k}$. Let
$L$ be a language in $\NP$ that requires $2^{n^{\epsilon}}$-size
circuits at every length.
Since $\SAT$ is complete via
polynomial-time, length-increasing reductions, it suffices to exhibit a
length-increasing, $\P/\poly$-reduction from $\SAT$ to $A$.

Let $\delta = \frac{\epsilon}{2k}$. Consider the following intermediate language
\[S = \myset{\langle x, y, z\rangle}{|x| = |z|, |y| = |x|^{\delta},
\maj[L(x), \SAT(y), L(z)] = 1}.\]

Clearly $S$ is in $\NP$. Since $A$ is $\NP$-complete, there is a many-one
reduction $f$ from $S$ to $A$. We will first show that at every length
$n$ there exist strings on which the reduction $f$ must be honest. Let

\[T_n = \myset{\langle x, z\rangle \in \{0,1\}^n \times \{0,1\}^n}{L(x) \neq
L(z), ~ \forall y\in \{0,1\}^{n^{\delta}} ~|f(\langle x, y, z\rangle)| >
n^{\delta}}\]
\begin{lemma}\label{NPlemma}
For all but finitely many $n$, $T_n \neq \varnothing$.
\end{lemma}

Assuming that the above lemma holds, we complete the proof of the
theorem. Given a length $m$, let $n = m^{1/\delta}$. Let $\langle x_n,
z_n\rangle$ be the first tuple from $T_n$.
Consider the following reduction from $\SAT$ to $A$: Given a string
$y$ of length $m$, the reduction outputs $f(\langle x_n, y,
z_n\rangle)$. Given $x_n$ and $y_n$ as advice, this reduction can be
computed in polynomial time. Since $n$ is polynomial in $m$, this is a
$\ppoly$ reduction.

By the definition of $T_n$,
$L(x_n) \neq L(z_n)$. Thus $y \in \SAT$ if and only if $\langle x_n, y,
z_n\rangle \in S$, and so $y$ is in $\SAT$ if and only if $f(\langle
x_n, y, z_n\rangle)$ is in $A$. Again, by the definition of $T_n$,
for every $y$ of length $m$,
the length of $f(\langle x_n, y, z_n\rangle)$ is bigger than $n^{\delta} =
m$.  Thus there is a $\ppoly$-computable, length-increasing reduction
from $\SAT$ to $A$.
This, together with the proof of Lemma \ref{NPlemma} we provide next,
complete the proof of Theorem \ref{NPthm}.
\end{proof}

%\begin{proof_lemma_named}{\ref{NPlemma}}
\begin{proof}[Proof of Lemma \ref{NPlemma}]
Suppose $T_n=\varnothing$ for infinitely many $n$. We will show that this
yields a length-wise 2-approximable algorithm for $L$ at infinitely many
lengths. This enables us to contradict the hardness of $L$.
Consider the following algorithm:

\begin{enumerate}
\item Input $x$, $z$ with $|x| = |z| = n$.
\item Find a $y$ of length $n^{\delta}$ such that $|f(\langle x, y,
x\rangle)| \leq n^{\delta}$.\label{two}
\item If no such $y$ is found, Output $10$.
\item If $y$ is found, then solve the membership of $f(\langle x, y,
z\rangle)$ in $A$. If $f(\langle x, y, z\rangle) \in A$, then output
$00$, else output $11$.\label{four}
\end{enumerate}

We first bound the running time of the algorithm.
Step \ref{two} takes $O(2^{n^{\delta}})$ time. In Step \ref{four}, we
decide the membership of $f(\langle x, y, z\rangle)$ in $A$. This step
is reached only if the length of $f(\langle x, y, z\rangle)$ is at most
$n^{\delta}$. Thus the time taken to for this step is
$(2^{n^{\delta}})^k \leq 2^{n^{\epsilon/2}}$ time. Thus the total time taken
by the algorithm is bounded by $2^{n^{\epsilon}/2}$.

Consider a length $n$ at which $T_n=\varnothing$.
Let $x$ and $z$ be any strings at this length. Suppose for every
$y$ of length $n^{\delta}$, the length of $f(\langle x, y, z\rangle)$ is
at least $n^{\delta}$. Then it must be the case that $L(x) = L(z)$,
otherwise the tuple $\langle x, z\rangle$ belongs to $T_n$. Thus if the
above algorithm fails to find $y$ in Step~\ref{two}, then $L(x)L(z) \neq
10$.

Suppose the algorithm succeeds in finding a $y$ in Step~\ref{two}.
If $f(\langle x, y, z \rangle) \in A$, then at least one of
$x$ or $z$ must belong to $L$. Thus $L(x)L(z) \neq 00$. Similarly,
if $f(\langle x, y, z\rangle) \notin A$, then at least one of $x$ or $z$
does not belong to $L$, and so $L(x)L(z) \neq 11$.

Thus $L$ is
2-approximable at length $n$. If there exist infinitely many
lengths $n$, at which $T_n$ is empty, then $L$ is infinitely-often,
length-wise, $2^{n^{\epsilon}/2}$-time approximable.
By Theorem~\ref{2mctheorem}, $L$ has circuits of size $2^{n^{\epsilon}}$ at
infinitely many lengths.
%\end{proof_lemma_named}
\end{proof}

Now we will describe how to modify the proof if we assume that Hypothesis
1 holds. Let $L$ be the hard  language guaranteed by the hypothesis.
We will work with 3-$\SAT$. Fix an encoding of 3CNF formulas such that
formulas with same numbers of variables can be encoded as strings of same
length. Moreover, we require that the formulas $\phi(x_1, \cdots, x_n)$ and
$\phi(b_1, \cdots, b_i, x_{i+1}, \cdots, x_n)$ can be encoded as strings of
same length, where $b_i \in \{0, 1\}$. Fix a reduction $f$ from $L$ to
3-$\SAT$ such that all strings of length $n$ are mapped to formulas with
$n^r$ variables, $r\geq 1$. Let 
$3\mbox{-}\SAT' = 3\mbox{-}\SAT \cap \cup_r \Sigma^{n^r}$.  It follows that
that if there is an algorithm that decides 3-$\SAT'$ such that for infinitely
many $n$ the algorithm runs in
$2^{n^{\epsilon}}$ time on all formulas with $n^r$ variables, then $L$
is in $\io\DTIME(2^{n^{\epsilon}})$.

Now the proof proceeds exactly same as before except that we use 3-$\SAT'$
instead of $L$, i.e, our intermediate language will be
\[\{\langle x, y, z\rangle~|~\maj[3\mbox{-}\SAT'(x), \SAT(y),
3\mbox{-}\SAT'(z)]\} = 1.\]  
Consider the set $T_n$ as before. It follows that
if $T_n$ is empty at infinitely many lengths, then for infinitely many $n$, 
3-$\SAT'$ is 2-approximable on formulas with $n^r$ variables. Now we can use
the disjunctive self-reducibility of 3-$\SAT'$ to show that there is a
an algorithm that solves 3-$\SAT'$ and for infinitely many $n$, this algorithm
runs in $\DTIME(2^{n^{\epsilon}})$-time on formulas with $n^r$ variables.
This contradicts the hardness of $L$. This gives the following theorem.

\begin{theorem}
If there is a language in $\NP$ that is not in $\io\DTIME(2^{n^{\epsilon}})$,
then all $\NP$-complete sets are complete via length-increasing $\ppoly$
reductions.
\end{theorem}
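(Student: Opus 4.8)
The plan is to run the proof of Theorem~\ref{NPthm} essentially verbatim, with the padded satisfiability language $3\mbox{-}\SAT'$ playing the role of the circuit-hard language $L$, and then to replace the single appeal to Theorem~\ref{2mctheorem} (which only manufactures small \emph{circuits}) by an argument that manufactures an actual \emph{algorithm}, so that Hypothesis~1 --- which speaks about $\io\DTIME(2^{n^{\epsilon}})$ rather than about circuit size --- is the statement that gets contradicted. First I would fix the encoding of $3$CNF formulas described in the paragraph above: formulas on the same number of variables receive codes of the same length, and substituting Boolean constants for a prefix $x_1,\dots,x_i$ of the variables leaves the code length unchanged. Next I would fix a polynomial-time many-one reduction from $L$ to $3\mbox{-}\SAT$ that sends every string of length $n$ to a formula on exactly $n^r$ variables, set $3\mbox{-}\SAT' = 3\mbox{-}\SAT \cap \bigcup_r \Sigma^{n^r}$, and record the key consequence of the encoding: if some algorithm decides $3\mbox{-}\SAT'$ in time $2^{n^{\epsilon}}$ on all $n^r$-variable formulas for infinitely many $n$, then $L \in \io\DTIME(2^{n^{\epsilon}})$.

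With this in place the reduction is built exactly as in Theorem~\ref{NPthm}. Take an $\NP$-complete set $A$ decidable in time $2^{n^{k}}$, put $\delta = \epsilon/2k$, and form the $\NP$ language $S = \myset{\langle x,y,z\rangle}{|x|=|z|,\ |y|=|x|^{\delta},\ \maj[3\mbox{-}\SAT'(x),\SAT(y),3\mbox{-}\SAT'(z)]=1}$ with $x,z$ restricted to codes of $n^r$-variable formulas. Fix a many-one reduction $g$ from $S$ to $A$, and let $T_n$ be the set of pairs $\langle x,z\rangle$ of codes of $n^r$-variable formulas with $3\mbox{-}\SAT'(x)\neq 3\mbox{-}\SAT'(z)$ for which $|g(\langle x,y,z\rangle)|>|y|$ for every $y$ of the relevant length. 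If $T_n\neq\varnothing$ for all but finitely many $n$, then hard-wiring the first pair of $T_n$ as advice gives a length-increasing $\ppoly$ reduction from $\SAT$ to $A$, and we are done because $\SAT$ is complete under polynomial-time length-increasing reductions. Conversely, the same algorithm as in the proof of Lemma~\ref{NPlemma} shows that if $T_n=\varnothing$ for infinitely many $n$, then $3\mbox{-}\SAT'$ is io-lengthwise $2^{n^{\epsilon}/2}$-time $2$-approximable on $n^r$-variable formulas.

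The one genuinely new ingredient, and the step I expect to be the main obstacle, is converting this io-lengthwise $2$-approximator into an io deterministic $2^{n^{\epsilon}}$-time decision procedure for $3\mbox{-}\SAT'$, thereby placing $L$ in $\io\DTIME(2^{n^{\epsilon}})$ and contradicting Hypothesis~1. Theorem~\ref{2mctheorem} does not suffice here, since it yields only circuits. The tool instead is the disjunctive self-reducibility of $3\mbox{-}\SAT$: to decide a formula $\phi$ on $n^r$ variables one recursively inspects $\phi[x_1=0]$ and $\phi[x_1=1]$, and the length-preserving property of the encoding guarantees that every formula met along the way has the same code length as $\phi$, so the $2$-approximator applies uniformly and at each node rules out one of the four joint membership patterns for the two children, which one uses to prune the recursion. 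The delicate point is to organize the pruning so that the number of calls to the approximator stays polynomial (or at least subexponential) in the recursion depth $n^r$, since each call already costs roughly $2^{n^{\epsilon}/2}$; once that bookkeeping is done the algorithm runs within $2^{n^{\epsilon}}$ on all $n^r$-variable formulas for infinitely many $n$, which by the consequence recorded in the first paragraph contradicts Hypothesis~1. Hence $T_n\neq\varnothing$ for all but finitely many $n$, and the theorem follows.
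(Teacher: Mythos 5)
Your proposal follows the paper's own argument essentially verbatim: fix a length-compatible encoding of 3CNF formulas, replace the circuit-hard language by $3\mbox{-}\SAT'$, rerun the Theorem~\ref{NPthm} construction to get either the desired reduction or an io-lengthwise 2-approximator, and then use the disjunctive self-reducibility of $3\mbox{-}\SAT'$ (rather than Theorem~\ref{2mctheorem}) to convert the approximator into a uniform $2^{n^{\epsilon}}$-time algorithm contradicting Hypothesis~1. The one step you flag as delicate --- the bookkeeping in the self-reducibility pruning --- is also left at the same level of detail in the paper, so the two arguments match.
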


\subsection{If $\CoNP$ is Hard for Nondeterministic Circuits}

In this subsection we show that  
Hypothesis 2 also implies that all NP-complete sets are complete via
length-increasing reductions.

\begin{theorem}
If there is a language $L$ in $\CoNP$ that is not in $\io\NP/poly$, then
$\NP$-complete sets are complete via $\ppoly$-computable, length-increasing 
reductions.
\end{theorem}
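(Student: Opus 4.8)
The plan is to mirror the structure of the proof of Theorem~\ref{NPthm}, replacing the deterministic-time/circuit-complexity hardness of the witness language by nondeterministic-circuit hardness, and replacing the $2$-approximability machinery (Theorem~\ref{2mctheorem}) by the instance-compression machinery of Theorem~\ref{fs-bh}. Let $A$ be an $\NP$-complete set, and let $L'$ be the language in $\CoNP$ guaranteed by Hypothesis~2, so $L' \notin \io\NP/\poly$. Form the padded language $H = \myset{\langle x_1,\dots,x_n\rangle}{|x_1| = \cdots = |x_n| = n,\ x_i \in L'}$ as in the setup preceding Theorem~\ref{fs-bh}; note $H \in \CoNP$ and, after the standard identification, $H$ lives on strings of length $n^2$. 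As before it suffices to build a length-increasing $\ppoly$ reduction from $\SAT$ to $A$, since $\SAT$ is complete under length-increasing polynomial-time reductions.

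First I would define the intermediate language $S = \myset{\langle u,y,v\rangle}{|u| = |v| = n^2,\ |y| = n,\ \maj[H(u),\SAT(y),H(v)] = 1}$, which is in $\NP$ (here $H \in \CoNP$, but $\maj$ of three bits including one $\CoNP$ query is still an $\NP$ predicate, since we can guess which two of the three we certify). Since $A$ is $\NP$-complete there is a polynomial-time many-one reduction $f$ from $S$ to $A$. Define, for each $n$,
\[
T_n = \myset{\langle u,v\rangle \in \Sigma^{n^2}\times\Sigma^{n^2}}{H(u)\neq H(v),\ \forall y\in\Sigma^{n}\ |f(\langle u,y,v\rangle)| > n}.
\]
The crux is the analogue of Lemma~\ref{NPlemma}: for all but finitely many $n$, $T_n \neq \varnothing$. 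Granting this, the reduction is built exactly as in Theorem~\ref{NPthm}: on input $y$ of length $n$, hard-wire the first pair $\langle u_n,v_n\rangle\in T_n$ as advice and output $f(\langle u_n,y,v_n\rangle)$; this is $\ppoly$-computable, length-increasing (output length $> n$), and correct because $H(u_n)\neq H(v_n)$ forces $y\in\SAT \iff \langle u_n,y,v_n\rangle\in S \iff f(\langle u_n,y,v_n\rangle)\in A$.

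The main obstacle is proving $T_n\neq\varnothing$ for almost all $n$, and this is where Theorem~\ref{fs-bh} enters in place of the $2$-approximation argument. Suppose $T_n = \varnothing$ for infinitely many $n$. Fix such an $n$. For any $u\in\Sigma^{n^2}$, applying the emptiness of $T_n$ with $v = u$: either there is a $y$ with $|f(\langle u,y,u\rangle)| \le n$, or $H(u) = H(u)$ trivially holds — so this case gives nothing directly; instead I would argue as in Lemma~\ref{NPlemma} that emptiness of $T_n$ means: whenever $H(u)\neq H(v)$, \emph{some} short $y$ (length $\le n$) has $|f(\langle u,y,v\rangle)|\le n$, and for that $y$, $f(\langle u,y,v\rangle)\in A \iff \langle u,y,v\rangle\in S \iff$ (since $H(u)\neq H(v)$) $\iff \SAT(y)=1$ — wait, more carefully: one designs from $f$ a \emph{nondeterministic} $\ppoly$ many-one reduction that, on input $u\in\Sigma^{n^2}$, guesses $y\in\Sigma^n$, checks $|f(\langle u,y,u\rangle)|\le n$, and outputs $f(\langle u,y,u\rangle)$ with target language $A^{\le n}$ (decidable by a polynomial-size circuit since $A\in\P$ restricted to length $\le n$); the emptiness of $T_n$ is exactly the structural condition making this reduction correctly decide $H$ on length $n^2$ infinitely often, in the form required by Theorem~\ref{fs-bh} (an $\NP/\poly$-type compression of $H$ into $A^{\le n}$). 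That theorem then yields $L' = H' \in \io\NP/\poly$, contradicting Hypothesis~2. The delicate points I expect to spend effort on are: (a) getting the padding dimensions to line up so that ``short $y$'' and ``queries of length $\le n$'' match the $p(n)$-bound and $L^{\le n}$-form in the hypothesis of Theorem~\ref{fs-bh}; (b) checking that $\SAT$ (rather than an arbitrary $\NP$-complete $A$) can be substituted for the generic $\NP$ set in the middle coordinate while keeping $S\in\NP$ and keeping the honesty-forcing argument intact; and (c) handling the three-way $\maj$ so that the $\CoNP$-ness of $H$ does not break the $\NP$ membership of $S$ — which is why $H$ appears in two of the three coordinates, exactly as $L$ did in Theorem~\ref{NPthm}.
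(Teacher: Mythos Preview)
Your proposal has a genuine gap at exactly the point you flagged as delicate: the intermediate language $S = \myset{\langle u,y,v\rangle}{\maj[H(u),\SAT(y),H(v)]=1}$ is \emph{not} in $\NP$. Your justification ``guess which two of the three we certify'' fails because certifying $u\in H$ is a $\CoNP$ condition, not an $\NP$ one. Concretely, restrict to inputs with $u=v$: then $\maj[H(u),\SAT(y),H(u)] = H(u)$, so this slice of $S$ is (a padding of) $H$, and hence of $H'$. If $S$ were in $\NP$ then $H'\in\NP\subseteq\NP/\poly$, contradicting Hypothesis~2. So there is no polynomial-time reduction $f$ from $S$ to the $\NP$-complete set $A$ to work with, and the rest of the argument cannot get started.

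The paper handles this by passing to complements: it proves that every $\CoNP$-complete set is complete under length-increasing $\ppoly$ reductions (which dualizes immediately to $\NP$). The middle coordinate carries a $\CoNP$-complete language $S=0H'\cup 1\overline{\SAT}$ rather than $\SAT$, so that all three slots of the majority are $\CoNP$ predicates and the intermediate language is genuinely in $\CoNP$. An equivalent repair on the $\NP$ side would be to replace $H$ by $\overline{H}\in\NP$ in the two outer coordinates.

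There is a second, smaller gap. Even after the class membership is fixed, your $T_n$ (with the quantifier $\forall y\in\Sigma^n$) is too strong to establish via Theorem~\ref{fs-bh}. The compression argument with a fixed advice string $v_n\notin H$ (your ``$\langle u,y,u\rangle$'' should read $\langle u,y,v_n\rangle$) only shows that when $H(u)\neq H(v_n)$ there is a \emph{short} $y$; but to land in $A^{\le n}$ on some nondeterministic path you need that short $y$ to additionally satisfy $y\in\SAT$ (respectively $y\in S$), and the negation of your $T_n$ does not give you that. The paper therefore proves the weaker statement ``$|f(\langle x_n,y,z_n\rangle)|>n$ for all $y\in S$,'' and then adds a separate observation producing a single $y_n\notin S$ with long image (else $S$, hence $H'$, would drop into $\io\ppoly$); the final reduction outputs $f(\langle x_n,y,z_n\rangle)$ when that is long and falls back to $f(\langle x_n,y_n,z_n\rangle)$ otherwise. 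Your write-up is missing this two-step mechanism.
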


\begin{proof}

We find it convenient to work with $\CoNP$ rather than $\NP$. 
We will show that all $\CoNP$-complete languages are complete via
$\ppoly$, length-increasing reductions.

Let $H'$ be a
language in $\CoNP$ that is not in $\io\NP/\poly$.  Let $H$ be
\[\{\langle x_1, \cdots, x_n\rangle~|~ \forall 1 \leq i \leq n, [x_i \in H'
\mbox{ and } |x_i| = n]\}.\]
Note that every $n$-tuple that may potentially belong to $H$ can be
encoded by a string of length $n^2$.

Let $S = 0H' \cup 1\overline{SAT}$.  It is easy to show that 
$S$ is in $\CoNP$ and $S$ is not in $\io\NP/poly$. Observe that $S$ is
$\coNP$-complete via length-increasing reductions. 
Let $A$ be any $\CoNP$-complete language. It suffices to exhibit a
length-increasing reduction from $S$ to $A$.

Consider the following
intermediate language:
\[L = \{\langle x, y, z\rangle~|~|x| = |z| = |y|^2, \maj[x \in H, y \in S, z \in H] = 1\}.\]

Clearly the above language is in $\CoNP$. Let $f$ be a many-one
reduction from $L$ to $\overline{A}$. As before we will first show at
every length $n$ that there exits strings $x$ and $z$ such that for
every $y$ in $S$ the length of $f(\langle x, y, z\rangle)$ is at least
$n$.

\begin{lemma}
For all but finitely many $n$, there exist two strings $x_n$ and $z_n$ of length
$n^2$  with $H(x_n) \neq H(z_n)$ and for every $y \in S^n$, 
$|f(\langle x_n, y, z_n\rangle)| > n$.
\end{lemma}

\begin{proof}
Suppose not. Then there exist infinitely many lengths $n$ at which for every
pair of strings (of length $n^2$) $x$ and $z$ with $H(x) \neq H(z)$, there
exist a $y$ of length $n$  such that $|f(x, y, z)| \leq n$.

From this we obtain a $\NP/\poly$-reduction from $H$ to $A$ such that for infinitely
many $n$, for every $x$ of length $n^2$, $|f(x)| \leq n$. By
Theorem~\ref{fs-bh}, this implies that $H'$ is in $\io\NP/\poly$.
We now describe the reduction. Given $n$ let $z_n$ be a string (of length
$n^2$) that is not in $H$.

\begin{enumerate}
\item Input $x, |x| = n^2$. Advice: $z_n$.
\item Guess a string $y$ of length $n$.
\item If $|f(\langle x, y, z_n\rangle)| > n$, the output $\bot$.
\item Output $f(\langle x, y, z_n)$.
\end{enumerate}

Suppose $x \in H$. Since $z_n \notin H$, there exists a string 
$y$ of length $n$ such that $y \in S$ and  $|f(\langle x, y, z_n\rangle)| \leq n$. 
Consider a path that correctly guesses such a $y$.
Since $z_n \notin H$, and $y \in S$, $\langle x, y, z_n\rangle \in L$. 
Thus $f(\langle x, y, z_n\rangle) \in A^{\leq n}$.
Thus there exists at least
one path on which the reduction outputs a string from $L \cap \Sigma^{\leq
n}$. Now consider the case $x \notin H$. On any path, the reduction either
outputs $\bot$ or outputs $f(\langle x, y, z_n\rangle)$. 
Since both $z_n$ and $x$ are not in $H$, $\langle x, y, z\rangle \notin L$.
Thus $f(\langle x, y, z_n\rangle) \notin A$ for any $y$.

Thus there is a $\NP/\poly$ many-one reduction from $H$ to $L$ such that for
infinitely many $n$, the output of the reduction, on strings of length $n^2$,
on any path is at most $n$.  
By Theorem~\ref{fs-bh}, this places $H'$ in $\io\NP/\poly$.

Thus for all but finitely many lengths $n$, there exist strings $x_n$ and
$z_n$ of length $n^2$ with $H(x_n) \neq H(z_n)$ and for every $y \in S^{n}$, the
length of $f(\langle x_n, y, z_n\rangle)$ is at least $n$.
\end{proof}

This suggests the following reduction $h$ from $S$ to $A$. The reduction will
have $x_n$ and $z_n$ as advice. Given a string $y$ of length $n$, the
reductions outputs $f(\langle x_n, y, z_n\rangle)$. This reduction is clearly
length-increasing and is length-increasing on every string from $S$.
Thus we have the following lemma.

\begin{lemma}
Consider the above reduction $h$ from $S$ to $A$, for all $y \in S$, $|h(y)| >
|y|$.
\end{lemma}

Now we show how to obtain a length-increasing reduction on all strings. 
We make the following crucial observation.

\begin{observation}
For all but finitely many $n$, there is a string $y_n$ of length $n$ such that
$y_n \notin S$ and $|f(\langle x_n, y_n, z_n\rangle )| > n$.
\end{observation}

\begin{proof}
Suppose not. This means that for infinitely many $n$, for every $y$ from
$\overline{S} \cap \Sigma^n$, the length of $f(\langle x_n, y, z_n\rangle)$ is
less than $n$. Now consider the following algorithm that solves $S$.
Given a string $y$ of length $n$, compute $f(\langle x_n, y, z_n\rangle)$. If
the length of $f(\langle x_n, y, z_n\rangle) > n$, then accept $y$ else reject
$y$.

The above algorithm can be implemented in $\ppoly$ given $x_n$ and $z_n$ as
advice. If $y \in S$, then we know that that the length of $f(\langle x_n, y,
z_n\rangle)$ is bigger than $n$, and so the above algorithm accepts. If $y
\notin S$, then by our assumption, the length of $f(\langle x_n, y,
z_n\rangle)$ is at most $n$. In this case the algorithm rejects $y$. This
shows that $S$ is in $\io\ppoly$ which in turn implies that $H'$ is in
$\io\ppoly$.  This is a contradiction.
\end{proof}

Now we are ready to describe our length increasing reduction from $S$ to $A$.
At length $n$, this reduction will have $x_n$, $y_n$ and $z_n$ as advice.
Given a string $y$ of length $n$, the reduction outputs $f(\langle x_n, y,
z_n\rangle)$ if the length of $f(\langle x_n, y, z_n\rangle)$ is more than $n$.
Else, the reduction outputs $f(\langle x_n, y_n, z_n\rangle)$.

Since $H(x_n) \neq H(z_n)$, $y \in S$ if and only if $f(\langle x_n, y,
z_n\rangle) \in A$. Thus the reduction is correct when it outputs $f(\langle
x_n, y, z_n\rangle)$. The reduction outputs $f(\langle x_n, y_n, z_n\rangle)$
only when the length of $f(\langle x_n, y, z_n\rangle)$ is at most $n$. We
know that in this case $y \notin S$. Since $y_n \notin S$, $f(\langle x_n,
y_n, z_n) \notin A$. 

Thus we have a $\ppoly$-computable, length-increasing from $S$ to $A$.
Thus all $\CoNP$-complete languages are complete via $\ppoly$,
length-increasing reductions. This immediately implies that all $\NP$-complete
languages are complete via $\ppoly$-computable, length-increasing reductions.
\end{proof}

\section{Separation of Completeness Notions}\label{se:tm}

In this section we consider the question whether the Turing
completeness differs from many-one completeness for $\NP$
under two plausible complexity-theoretic hypotheses:
\begin{enumerate}[(1)]
\item There exists a $2^{n^\epsilon}$-secure one-way permutation.
\item $\NEEE\cap\coNEEE \not\subseteq \EEElog$.
\end{enumerate}
It turns out that the first hypothesis implies that every many-one complete language
for $\NP$ is complete under a particular kind of length-increasing reduction,
while the second hypothesis provides us with a specific Turing complete language that
is not complete under the same kind of length-increasing reduction.
Therefore, the two hypotheses together separate the notions of many-one and
Turing completeness for $\NP$ as stated in the following theorem.

\begin{theorem}\label{TMthm}
If both of the above hypotheses are true, there is is a language that is
polynomial-time Turing complete for $\NP$ but not polynomial-time
many-one complete for $\NP$.
\end{theorem}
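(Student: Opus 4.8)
I would obtain the theorem by combining two lemmas that isolate the two hypotheses, both stated in terms of \emph{length-increasing, $\QP$-computable} many-one reductions (``QP-li reductions'' for short). \emph{Lemma~A:} if Hypothesis~(1) holds, then every $\leq^p_m$-complete set for $\NP$ is complete under QP-li reductions. \emph{Lemma~B:} if Hypothesis~(2) holds, then there is a language $L\in\NP$ that is $\leq^p_T$-complete for $\NP$ but is \emph{not} complete under QP-li reductions. Granting these, Theorem~\ref{TMthm} is immediate: the set $L$ of Lemma~B is Turing complete, and if it were also many-one complete then Lemma~A would make it complete under QP-li reductions, contradicting Lemma~B. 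So the work is entirely in the two lemmas.

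\noindent\emph{Lemma~A.} Here I would follow Agrawal's approach~\cite{Agr02}, organized through the ``honest anchor'' machinery of Section~\ref{li}. Let $\pi$ be a $2^{n^{\epsilon}}$-secure one-way permutation and set $B_\pi=\{\langle y,i\rangle : \text{the } i\text{th bit of }\pi^{-1}(y)\text{ is }1\}$; then $B_\pi\in\NP\cap\CoNP$, and from the security of $\pi$ one gets (average-case hardness implies worst-case hardness here) an $\epsilon'<\epsilon$ with $CC((B_\pi)_n)\ge 2^{n^{\epsilon'}}$ for almost all $n$. Padding $B_\pi$ into $\E$ and invoking Theorem~\ref{lm:pseudorandom} yields a PRG family $G=\{G_n:\{0,1\}^{\log^{O(1)}n}\to\{0,1\}^n\}$ fooling polynomial-size circuits. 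Now mimic the proof of Theorem~\ref{NPthm} with $B_\pi$ as the hard language: given a $\leq^p_m$-complete $A$ decidable in time $2^{n^k}$, put $\delta=\epsilon'/2k$, form $S=\{\langle x,y,z\rangle : |x|=|z|,\ |y|=|x|^{\delta},\ \maj[B_\pi(x),\SAT(y),B_\pi(z)]=1\}\in\NP$, fix a polynomial-time many-one reduction $f$ from $S$ to $A$, and argue as in Lemma~\ref{NPlemma} that for almost all $n$ there is an anchor pair $(x_n,z_n)$ with $B_\pi(x_n)\neq B_\pi(z_n)$ and $|f(\langle x_n,y,z_n\rangle)|>n^{\delta}$ for every $y$ --- otherwise Theorem~\ref{2mctheorem} produces circuits of size below $2^{n^{\epsilon'}}$ for $B_\pi$ at infinitely many lengths, contradicting its hardness. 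Then $y\mapsto f(\langle x_n,y,z_n\rangle)$ is length-increasing and decides $\SAT$. Where Agrawal hard-wires $(x_n,z_n)$ as polynomial advice (obtaining only a $\ppoly$ reduction), the subexponential security should allow replacing that nonuniformity by a deterministic $\QP$-time search for a working anchor (guided by $G$), yielding a uniform, length-increasing, $\QP$-computable reduction from $\SAT$, and hence from every $\NP$ set, to $A$.

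\noindent\emph{Lemma~B.} First I would downward-translate Hypothesis~(2): a routine tally-ization/padding argument turns $\NEEE\cap\coNEEE\not\subseteq\EEE/\log$ into a tally set $T\in\NP\cap\CoNP$ that is supported only at a triple-exponentially spaced sequence of lengths $\ell_1<\ell_2<\cdots$ and that no $\QP$-time algorithm with $O(\log i)$ bits of advice decides on inputs of length $\ell_i$ --- the triple exponential in the time bound and the single logarithm of advice in Hypothesis~(2) are calibrated exactly so that such an algorithm would place the witnessing language in $\EEE/\log$. Then I would build $L\in\NP$ by superimposing $\SAT$ with an encoding of $T$: put $\SAT$ on $0$-prefixed strings, and on $1$-prefixed strings encode, spread over a dense band of lengths surrounding each $\ell_i$, the single bit ``$0^{\ell_i}\in T$'', together with enough $\NP$-verifiable padding (using $T\in\NP$) to keep $L$ in $\NP$. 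Then $\SAT\leq^p_T L$ via $x\mapsto 0x$, so $L$ is Turing complete. Finally, if some length-increasing $\QP$-computable $g$ reduced $\SAT$ to $L$, then at input lengths inside the band around $\ell_i$ the length-increasing requirement would force $g$ to output into the $1$-prefixed part, and testing whether $g$'s output lies in $L$ would recover the bit ``$0^{\ell_i}\in T$'' in $\QP$ time using only the index $i$ (that is, $O(\log i)$ bits of advice) --- contradicting the hardness of $T$. Hence $L$ is not complete under QP-li reductions.

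\noindent\textbf{Main obstacle.}
I expect the crux to be the construction of $L$ in Lemma~B. The encoding of $T$ must be at once (i) dense and positioned precisely enough around each $\ell_i$ that the \emph{sole} hypothesis ``$g$ is length-increasing'' already forces $g$ to decode the bit ``$0^{\ell_i}\in T$''; (ii) structured enough that $L$ stays in $\NP$; (iii) arranged so that the $T$-part does not itself yield a many-one reduction from $\SAT$ to $L$, which would make $L$ many-one complete and collapse the whole plan; and (iv) decodable within the $O(\log i)$ advice budget allowed by the triple-exponential calibration of Hypothesis~(2), since any slack there breaks the contradiction with $\NEEE\cap\coNEEE\not\subseteq\EEE/\log$. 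A secondary technical point is making the reduction in Lemma~A genuinely uniform and $\QP$-time-bounded rather than merely $\ppoly$-computable, i.e.\ carrying out the anchor search within quasipolynomial time.
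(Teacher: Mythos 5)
Your top-level decomposition is exactly the paper's: Theorem~\ref{TMthm} is obtained by combining Lemma~\ref{thm:one-way-qp-li} (Hypothesis (1) gives quasipolynomial-time, polynomial-bounded, length-increasing completeness) with Lemma~\ref{lm:3_5} (Hypothesis (2) gives a Turing complete set that is not complete under such reductions), and the way you combine them is the same. However, both of your lemma sketches have genuine gaps. For Lemma~A, transplanting the anchor argument of Theorem~\ref{NPthm} and then ``derandomizing the anchor search'' does not work: a good anchor pair $(x_n,z_n)$ must satisfy a universally quantified condition (for \emph{all} $y$ of length $n^{\delta}$, $|f(\langle x_n,y,z_n\rangle)|>n^{\delta}$), which takes $2^{n^{\delta}}$ time to verify, and Lemma~\ref{NPlemma} only guarantees that \emph{one} good pair exists, not that most pairs are good --- so a PRG gives no handle on finding one. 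The paper's route (via Agrawal's sparsely-many-one lemma) instead derandomizes over the \emph{padding} $y$ in a reduction from $\{\langle x,y\rangle : x\in B,\ |y|=|x|^{1/\delta}\}$: there most $y$ are good and the goodness of a single $y$ is checkable in quasipolynomial time, which is exactly the situation a PRG handles. Note also that settling for a $\ppoly$ reduction is not an option here: polynomial advice at the $\NP$ level becomes triple-exponential advice at the $\EEE$ level, destroying the contradiction with $\EEE/\log$ needed in Lemma~B.

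For Lemma~B, the construction you describe --- $\SAT$ on $0$-prefixed strings, the tally bits on $1$-prefixed strings --- is broken at the start: $x\mapsto 0x$ is then a polynomial-time \emph{many-one} reduction from $\SAT$ to $L$, so $L$ is many-one complete and the plan collapses (you flag this as obstacle (iii) but do not resolve it). The paper's fix is to split $\SAT$ \emph{conditionally}: $0x\in A$ iff $x\in\SAT$ and the relevant tally string is \emph{not} in $T_L$, while $1x\in A$ iff $x\in\SAT$ and it \emph{is}; then neither prefix alone is a correct many-one reduction, but querying both gives a $2$-$\mathrm{tt}$ reduction. A second missing ingredient: a length-increasing, polynomial-bounded reduction applied to $0^{t_i}$ only reveals membership of the tally string at index $\tau(|y|)$, which you cannot steer to a prescribed target --- it can land anywhere within $\log d$ of $i$. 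So the extracted quasipolynomial-time algorithm decides the tally set only on a $1/\log d$ fraction of each block and must answer ``I do not know'' elsewhere; the worst-case hardness of $T$ that you invoke is therefore not enough, and the paper needs the average-case machinery of Lemmas~\ref{lm:3_3} and~\ref{lm:3_4} (a worst-case to average-case amplification inside $\EEE$) to turn Hypothesis (2) into a contradiction. Relatedly, the paper applies the hypothesized reduction to the trivial language $\{0\}^*$ rather than to $\SAT$, so that the output is guaranteed to lie in $A$ and its first bit is guaranteed to encode the tally bit (Observation~\ref{easyobs}); with source $\SAT$ you lose that guarantee.
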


Theorem~\ref{TMthm} is immediate from Lemma~\ref{thm:one-way-qp-li}
and Lemma~\ref{lm:3_5} below.

\begin{lemma}\label{thm:one-way-qp-li}
Suppose $2^{n^{\epsilon}}$-secure one-way permutations exist.
Then for every $\NP$-complete language $A$ and every $B\in\NP$, there is
a quasipolynomial-time computable, polynomial-bounded,
length-increasing reduction  reduction $f$ from $B$ to $A$.
\end{lemma}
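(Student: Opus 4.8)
The plan is to mimic the structure of the proof of Theorem~\ref{NPthm}, but replace the worst-case circuit hardness of the language $L$ with the one-way permutation, which directly supplies a hard-to-invert function whose hardness we can exploit with a quasipolynomial-time reduction. Fix a $2^{n^\epsilon}$-secure one-way permutation $\pi$, and let $A$ be an arbitrary $\NP$-complete set decidable in time $2^{n^c}$ for some $c$. Since $\SAT$ (indeed $B$) reduces to $\SAT$ via polynomial-time length-increasing reductions, it suffices to produce a quasipolynomial-time, polynomial-bounded, length-increasing reduction from $\SAT$ to $A$. Following the template, I would build an intermediate $\NP$ language that embeds a padded $\SAT$-instance $y$ together with two "anchor" strings $x,z$ whose job is to force honesty of the many-one reduction $f: S \to A$. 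The natural choice here is to use the \emph{bit-value predicate} of $\pi^{-1}$ as the hard target: for a string $w$ and an index $i$, the predicate "the $i$-th bit of $\pi^{-1}(w)$ is $1$" is hard to compute by small circuits, and unlike a $\CC$-hard language it is paddable and has the self-reducibility we need.

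Concretely, let $\delta = \epsilon/(2c)$ and take
\[
S = \bigl\{\langle w, i, y\rangle \;\bigl|\; |y| = |w|^{\delta},\ \maj\bigl[\text{bit}_i(\pi^{-1}(w)),\, \SAT(y),\, \text{bit}_i(\pi^{-1}(w))\bigr] = 1\bigr\},
\]
or more simply an analogue of the $\maj$-gadget where the two outer coordinates are the \emph{same} hard bit. Then $S \in \NP$ (guess the preimage under $\pi$, verify with $\pi$ in polynomial time), so $f$ reduces $S$ to $A$. The core combinatorial step, paralleling Lemma~\ref{NPlemma}, is: for all but finitely many $n$ there is a pair $(w,i)$ at length $n$ on which $f(\langle w, i, \cdot\rangle)$ is honest for every $y$ of length $n^\delta$. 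If this failed at infinitely many $n$, then at those lengths, for every $(w,i)$ we could find a short $y$ with $|f(\langle w,i,y\rangle)| \le n^\delta$, decide membership of that short image in $A$ in time $(2^{n^\delta})^c \le 2^{n^{\epsilon/2}}$, and thereby \emph{compute} $\text{bit}_i(\pi^{-1}(w))$ correctly — giving a subexponential-size circuit that inverts a nontrivial fraction of $\pi$, contradicting $2^{n^\epsilon}$-security via the Goldreich--Levin hard-core bit. Note that the search over $y \in \{0,1\}^{n^\delta}$ costs $2^{n^\delta}$ time, which is quasipolynomial in the relevant input length $m = n^\delta$ (since $n = m^{1/\delta}$ and $2^{n^\delta} = 2^{m}$ — here one must be careful and instead pad so that $y$ has length $\mathrm{polylog}$ in $|w|$, i.e. set $|w| = 2^{|y|^{1/\delta}}$-sized padding, making the brute-force search quasipolynomial in the final instance length); this is exactly why the conclusion gives a \emph{quasipolynomial}-time rather than polynomial-time reduction.

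With the honest pair $(w_n, i_n)$ fixed as advice at each length (and using that $\text{bit}_{i_n}(\pi^{-1}(w_n))$ is a fixed bit, so the two outer coordinates agree on a value that makes $S$-membership track $\SAT$-membership of $y$), the map $y \mapsto f(\langle w_n, i_n, y\rangle)$ is the desired length-increasing reduction on $\SAT$-instances of the padded length; a small amount of additional bookkeeping, exactly as in the last paragraphs of the proof of the $\CoNP$ theorem, patches it to a reduction that is length-increasing on \emph{all} inputs (handle the short-image case by redirecting to a fixed string $f(\langle w_n, i_n, y_n\rangle)$ for some $y_n \notin \SAT$ whose image is long, whose existence follows because otherwise $\SAT \in \io\ppoly$, contradicting the hardness of $\pi^{-1}$). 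The main obstacle I anticipate is the bookkeeping around padding: one must choose the padding relation between $|w|$, the index range for $i$, and $|y|$ so that (a) the brute-force search over $y$ is quasipolynomial in the \emph{output instance length}, (b) the advice $(w_n, i_n)$ is polynomial-length, and (c) the reduction stays polynomially length-bounded. Getting these three constraints simultaneously — together with correctly invoking Goldreich--Levin so that "computing one bit of $\pi^{-1}(w)$ at \emph{every} length" genuinely contradicts $s(n)$-security at \emph{almost every} length — is the delicate part; the rest is a routine adaptation of the Theorem~\ref{NPthm} argument.
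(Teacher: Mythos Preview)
Your proposal has two problems, one fixable and one fundamental.

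The fixable one: your intermediate language uses $\maj[\mathrm{bit}_i(\pi^{-1}(w)),\,\SAT(y),\,\mathrm{bit}_i(\pi^{-1}(w))]$ with the \emph{same} anchor on both sides. The majority of $(b,c,b)$ is always $b$, so membership in $S$ is independent of $\SAT(y)$ and no correct reduction can come out of it. The template of Theorem~\ref{NPthm} needs two anchors that \emph{disagree}, so that the middle coordinate breaks the tie; you would need two pairs $(w,i)$ and $(w',i')$ with opposite bit values.

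The fundamental one: even after repairing the gadget, your argument yields a \emph{nonuniform} reduction. You write ``with the honest pair $(w_n,i_n)$ fixed as advice at each length,'' but Lemma~\ref{thm:one-way-qp-li} asks for a quasipolynomial-time \emph{uniform} reduction, with no advice. In Theorem~\ref{NPthm} the anchors are supplied as advice precisely because there is no uniform way to find them, and your proposal inherits this limitation: certifying that two anchors have opposite bit values requires inverting $\pi$, and certifying the honesty condition for all $y$ is an exhaustive search that (per your own padding discussion) is not quasipolynomial in the right parameter. The result would be at best a $\QP/\poly$ reduction, which is not what is claimed and not what Lemma~\ref{lm:3_5} can be played against.

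The paper's argument avoids anchors entirely. It invokes Agrawal's theorem that under the one-way-permutation hypothesis every $\NP$-complete $A$ admits reductions that are $\gamma$-\emph{sparsely many-one} (each output has at most $2^n/2^{n^\gamma}$ preimages of length $n$). Pad $B$ to $S=\{\langle x,y\rangle : x\in B,\ |y|=|x|^{1/\delta}\}$ and take such a sparse reduction $f:S\to A$. Sparseness forces $\{y:|f(\langle x,y\rangle)|\le |x|\}$ to have density $2^{-\Omega(n^{3/2})}$, so a \emph{random} pad $y$ makes $f$ length-increasing. The quasipolynomial time then comes from \emph{derandomization}, not brute force: the one-way permutation gives a language in $\E$ with $2^{n^\epsilon}$ circuit complexity, hence (Theorem~\ref{lm:pseudorandom}) a PRG with seed length $\log^{O(1)} n$; cycling over all seeds in quasipolynomial time finds a good $y$ for every input $x$, uniformly and with no advice. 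This is the mechanism your proposal is missing.
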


A function $f$ is {\em polynomial-bounded} if there is a polynomial $p$ such
that the length of $f(x)$ is at most $p(|x|)$ for every $x$.

\begin{lemma}\label{lm:3_5}
If $\NEEE\cap\coNEEE \nsubseteq \EEElog$, then there is a
polynomial-time Turing complete set for $\NP$ that is not many-one
complete via quasipolynomial-time computable, polynomial-bounded,
length-increasing reductions.
\end{lemma}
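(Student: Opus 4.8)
The plan is to build the desired Turing-complete-but-not-li-complete set by the standard "encoding a hard tally language into SAT" construction, tuned so that a length-increasing $\QP$-reduction would be forced to compress the tally language too much. First I would use the hypothesis $\NEEE\cap\coNEEE\nsubseteq\EEElog$ to extract a tally language $T\subseteq 1^*$ that lies in $\NP\cap\coNP$ and is hard in the sense that no $\QP$-computable procedure with a small amount of advice decides $T$ at all but finitely many lengths. The triple-exponential gap is exactly what is needed here: starting from $L\in\NEEE\cap\coNEEE$ not in $\EEElog$, pad it down three exponential levels to obtain a tally set in $\NP\cap\coNP$, and a $\QP$-algorithm (or even a $\ppoly$ or $\io\NP/\poly$ decider) for that tally set would translate back to an $\EEE/\log$ decider for $L$, contradicting the hypothesis. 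I would state this as a preliminary claim and prove it by the routine padding argument.

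Next I would define the candidate set. Let $T$ be the hard tally set above and let $C = \SAT \oplus T' $ where $T'$ is a suitable "spread-out" encoding of $T$; more precisely, put $C = \{\,0x : x\in\SAT\,\}\cup\{\,1\langle 1^n,w\rangle : 1^n\in T,\ w\in\Sigma^{g(n)}\,\}$ for a fast-growing function $g$, so that the $T$-part of $C$ at a given length is either all of a huge combinatorial cube or empty, according to membership of $1^n$ in $T$. Because $T\in\NP\cap\coNP$, the set $C$ is in $\NP$, and $C$ is clearly $\le^p_T$-complete for $\NP$ (it contains SAT as an easy-to-access slice). The point of the blow-up factor $g$ is that the only way to pack information about whether $1^n\in T$ into strings of length $n$ (or polynomially larger) via a single length-increasing, polynomial-bounded map is to have enormously many distinct images, and I want to arrange the parameters so that an li-reduction cannot do this faithfully.

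Then comes the main step: assume for contradiction that some $M$ is $\QP$-computable, polynomial-bounded, length-increasing and many-one reduces, say, $\SAT$ (or directly $C$ itself, whichever is cleaner) to $C$. The goal is to use $M$ to decide the hard tally set $T$ quickly with little advice, contradicting the preliminary claim. The argument mirrors the Lemma~\ref{NPlemma}/Observation style reductions in Section~\ref{li}: feed $M$ the $T$-coded instances of $C$; because $M$ is length-increasing and polynomial-bounded, at each length its behavior on the whole combinatorial cube associated to $n$ is determined by only $\poly(n)$ bits of "where the images land"; record those bits as advice and thereby either directly read off $T(1^n)$ or set up a small case analysis (using that the SAT-part of $C$ is self-reducible, so $C$ membership of the images can be evaluated in $\QP$ time on short enough strings). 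The net effect is a $\QP$-time, $\poly$-advice, almost-everywhere-correct decision procedure for $T$, contradicting hardness of $T$.

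The hard part will be the bookkeeping in this last step: choosing the encoding parameter $g$ and the instance length $n$ versus reduction output length so that (i) $C\in\NP$ survives (needs $T\in\NP\cap\coNP$, hence the triple-exponential detour), (ii) the images on which we must decide $C$-membership are short enough that a $\QP$ machine can afford to run SAT's self-reduction on them, and (iii) the advice extracted is genuinely only polynomial in $n$ while still pinning down $T(1^n)$. Balancing (ii) and (iii) against the hardness guarantee for $T$ is the crux; the choice of a triple-exponential-time hypothesis is precisely what gives enough slack to make all three constraints compatible. Once those parameters are fixed, the reduction itself is a direct imitation of the length-increasing-reduction arguments already developed in Section~\ref{li}, now run in the contrapositive direction to refute the existence of the li-reduction rather than to construct one.
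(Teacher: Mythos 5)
Your construction of the candidate set is where the argument breaks. You set $C=\{0x : x\in\SAT\}\cup\{1\langle 1^n,w\rangle : 1^n\in T,\ w\in\Sigma^{g(n)}\}$ and note that $C$ is Turing complete because it ``contains SAT as an easy-to-access slice.'' But that very feature makes $C$ many-one complete under polynomial-time \emph{length-increasing} reductions: $x\mapsto 0x$ is such a reduction from $\SAT$ to $C$, and composing it with the standard length-increasing reductions from any $\NP$ language to $\SAT$ gives length-increasing reductions from all of $\NP$ to $C$. So no contradiction can be extracted from a hypothetical li-reduction to $C$ --- one exists unconditionally and carries no information whatsoever about $T$. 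The paper's construction avoids this by \emph{gating} $\SAT$ on the tally set rather than juxtaposing them: $A=\SAT_0\cup\SAT_1$ with $\SAT_0=\{0x: 0^{t_{\tau(|x|)}}\notin T_L,\ x\in\SAT\}$ and $\SAT_1=\{1x: 0^{t_{\tau(|x|)}}\in T_L,\ x\in\SAT\}$. A Turing reduction can ask both $0x$ and $1x$; a many-one reduction must commit to the correct first bit, and --- applied to the trivial language $\{0\}^*$, so that every image $f(0^{t_i})=by$ is guaranteed to lie in $A$ --- that first bit literally equals $T_L(0^{t_{\tau(|y|)}})$. This forced disclosure of tally membership is the engine of the whole proof, and your $C$ has no analogue of it.

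Second, even with a correct candidate set, your closing step demands too little hardness and promises too much from the reduction. Because the reduction is only polynomial-bounded, the length of $f(0^{t_i})$ pins down the index $\tau(|f(0^{t_i})|)$ only to within about $\log d$ consecutive positions in the tower $t_{i+1}=t_i^2$; consequently the derived algorithm decides $T_L$ correctly on only a $\frac{1}{\log d}$ fraction of the tally strings in each block $\T_l$ and must answer ``I do not know'' on the rest. Worst-case hardness of the tally set (your ``no QP procedure with small advice decides $T$ at almost all lengths'') is therefore not contradicted. This is exactly why the paper needs the average-case machinery of Lemmas \ref{lm:3_3} and \ref{lm:3_4}: a worst-case-to-average-case amplification inside $\NEEE\cap\coNEEE$ producing a language that no $\EEE/\log$ algorithm decides on more than a $\frac12+\frac1n$ fraction of inputs, together with a translation from an errorless heuristic correct on a $\frac{1}{\log d}$ fraction of $\T_l$ to an $\EEE/1$ algorithm correct on a $\frac12+\frac{1}{2\log d}$ fraction of $\{0,1\}^l$. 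Your plan, which aims to recover an ``almost-everywhere-correct decision procedure for $T$,'' cannot be realized from the partial information the reduction actually yields.
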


\ignore{
To show Lemma~\ref{thm:one-way-qp-li} we use the following result of
Agrawal \cite{Agr02}.
%%\begin{definition}[Agrawal \cite{Agr02}]
Let $\gamma>0$ and let $S\subseteq \{0,1\}^n$.
A function $g$ is $\gamma$-{\em sparsely many}-{\em one} on $S$
if for every $x\in S$,
\[|g^{-1}(g(x))\cap \{0,1\}^n|\leq \frac{2^n}{2^{n^\gamma}}.\]
%%\end{definition}
%%We will use the following result from Agrawal \cite{Agr02}.
\begin{lemma}[Agrawal \cite{Agr02}]\label{sparselemma}
Suppose $2^{n^{\epsilon}}$-secure one-way permutations exists. For
every NP-complete language $L$ for every set $S$ in $\NP$, there is a
reduction from $f$ from $S$ to $L$ that is $\frac{\epsilon}{2}$-sparsely
many-one on $\{0,1\}^n$ for all $n\in\N$.
\end{lemma}

%% Change bd to pairing function.
\newcommand{\bd}[1]{\langle #1 \rangle}

%\begin{proof_theorem_named}{\ref{thm:one-way-qp-li}}
\begin{proof}[Proof of Theorem \ref{thm:one-way-qp-li}]
Let $\delta = \epsilon/3$.
Consider the following set
\[S = \myset{\bd{x,y}}{x\in B\text{ and }|y| = n^{1/\delta}}.\]
%%where $\bd(\cdot)$ is the bit-doubling function that duplicates each bit
%%of the input $x$.
For $|x|=n$, let $m = |\bd{x,y}|$.
%= 2n + 2 + n^{1/\delta}$.
%{\bf Do we need to elaborate why this is the length?}
By Lemma~\ref{sparselemma}, there is a reduction $f$ from $S$ to $A$ that
is $\frac{\epsilon}{2}$-sparsely many-one on $\Sigma^m$ for every $m$.
For any $x$ of length $n$, let $S_x = \{y \in \Sigma^{n^{1/\delta}}~|~ |f(\langle x, y\rangle)| \leq n\}$.
%Thus total number of strings from $\{0,1\}^m$ that are mapped to
%strings of length less $n+1$ is at most
It follows that the size of $S_x$ is at most $\frac{2^{n+m+1}}{2^{m^{\epsilon/2}}}$. Thus for at least $1-
2^{-2n^{3/2}}$ fraction of strings $y$ from $\{0,1\}^{n^{1/\delta}}$,
$|f(\bd{x,y})|>n$. If we randomly pick a $y \in \{0,1\}^{n^{1/\delta}}$,
then with very high probability $|f(\pair{x,y})|>n$.

We can derandomize the above process.
If $2^{n^{\epsilon}}$-secure one-way permutations exist, then $\EXP$ does not have $2^{n^{\epsilon}}$-size circuits.
Thus by Theorem~\ref{lm:pseudorandom}, there exists a pseudorandom generator family $\{G_n\}$ that map strings of length
$\log^k n$ to strings of length $n$. These pseudorandom generators are
computable in $O(2^{\log^d n})$ time for some constant $d$.

Now the length-increasing reduction from $B$ works as follows.
Given $x$ as input of length $n$, let $t = n^{1/\delta}$.
Cycle through all seeds $s$ of length $\log^k t$ till the length of $f(\bd{x,G_t(s)})$
is bigger than $n$. Output $f(\bd{x,G_t(s)})$. Since $G_t$ is a
pseudorandom generator, it follows that
there exists at least one $s$ for which the length $f(\bd{x,G_t(s)})$ is bigger than $n$.

Clearly, the above reduction can be computed in time quasipolynomial
in $n$.  The output of the reduction is bounded by the length of
$f(\bd{x,y})$, where $|y|=n^{1/\delta}$.  Since $f$ is polynomial-time
computable, the reduction is polynomial-bounded.
%\end{proof_theorem_named}
\end{proof}
}

\newcommand{\T}{\mathcal{T}}
The proof of Lemma~\ref{thm:one-way-qp-li} will appear in the full paper.
The remainder of this section is devoted to proving Lemma \ref{lm:3_5}.
It is well known that any set $A$ over $\Sigma^*$ can be encoded as a
tally set $T_A$ such that $A$ is worst-case hard if and only if $T_A$
is worst-case hard. For our purposes, we need an average-case version
of the this equivalence.  Below we describe particular encoding of
languages using tally sets that is helpful for us and prove the
average-case equivalence.

Let $t_0 = 2$, $t_{i+1} = t_i^2$ for all $i\in\N$.
Let $\T =\myset{0^{t_i}}{i\in\N}$.
For each $l\in\N$, let $\T_l = \myset{0^{t_i}}{2^l-1\leq i\leq
2^{l+1}-2}$.
Observe that $\T=\bigcup_{l=0}^\infty \T_l$.
Given a set $A \subseteq \{0, 1\}^*$, let
$T_A =\mysetl{0^{2^{2^{r_x}}}}{ x\in A},$
where $r_x$ is the rank index of $x$ in the standard enumeration of
$\strings$.
%Then \[x\in A\iff 0^{t_{r_x}}\in T_A.\]
%and $T_A\subseteq T_0$ with $T_{\strings} = T_0$.
It is easy to verify that for all $l\in\N$ and every $x$,
\begin{eqnarray}
x\in A \cap \{0,1\}^l & \iff  & 0^{t_{r_x}}\in T_A\cap \T_l.\label{eq:one}
\end{eqnarray}

%Therefore, the memberships of strings in $A$ and memberships of tally strings in $T_A$
%has a one-to-one correspondence with each other.

\begin{lemma}\label{lm:3_3}
Let $A$ and $T_A$ be as above.  Suppose there is a quasipolynomial
time algorithm $\sA$ such that for every $l$, on an $\epsilon$ fraction
of strings from $\T_l$, this algorithm correctly decides the
membership in $T_A$, and on the rest of the strings the algorithm
outputs ``I do not know''.  There is a $2^{2^{2^{k(l+1)}}}$-time
algorithm $\sA'$ for some constant $k$ that takes one bit of advice
and correctly decides the membership in $A$ on
$\frac{1}{2}+\frac{\epsilon}{2}$ fraction of the strings at every
length $l$.
\end{lemma}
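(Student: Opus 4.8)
The plan is to use the fact that $x\mapsto 0^{t_{r_x}}$ is a bijection from $\{0,1\}^l$ onto $\T_l$ (the ranks of the length-$l$ strings in the standard enumeration are exactly $2^l-1,\dots,2^{l+1}-2$, and $t_i=2^{2^i}$, so $0^{t_{r_x}}=0^{2^{2^{r_x}}}$), so that by~\eqref{eq:one}, deciding whether $x\in A$ is the same as deciding whether $0^{t_{r_x}}\in T_A$. On input $x$ with $|x|=l$, the algorithm $\sA'$ computes the rank $r_x$, writes down $w=0^{t_{r_x}}\in\T_l$, and runs $\sA(w)$. If $\sA(w)$ returns a definite answer it is correct --- since $\sA$ is never wrong, it only ever answers ``I do not know'' --- so $\sA'$ copies it; otherwise $\sA'$ outputs its advice bit.

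Next I would fix the advice and do the counting. Let $U_l\subseteq\{0,1\}^l$ be the set of strings $x$ for which $\sA(0^{t_{r_x}})$ answers ``I do not know'', and let $c_l$ be the fraction of $x\in\{0,1\}^l$ for which $\sA(0^{t_{r_x}})$ is correct; since $x\mapsto 0^{t_{r_x}}$ is a bijection onto $\T_l$, the hypothesis gives $c_l\ge\epsilon$, and because $\sA$ is one-sided we get exactly $|U_l|=(1-c_l)2^l$. Let the one bit of advice for length $l$ record whether at least half of $U_l$ lies in $A$; outputting that bit on every string of $U_l$ is then correct on at least $|U_l|/2$ of them. Hence the fraction of $\{0,1\}^l$ on which $\sA'$ is correct is at least $c_l+\tfrac{1}{2}(1-c_l)=\tfrac{1}{2}+\tfrac{c_l}{2}\ge\tfrac{1}{2}+\tfrac{\epsilon}{2}$, as required.

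Finally I would verify the running time. Since $r_x\le 2^{l+1}-2$, the input $w$ has length $t_{r_x}=2^{2^{r_x}}<2^{2^{2^{l+1}}}$, so writing $w$ down costs $O(|w|)$ steps and computing $r_x$ costs $\poly(l)$ steps; and because $\sA$ runs in quasipolynomial time, running it on $w$ costs $2^{(\log|w|)^{O(1)}}=2^{2^{O(r_x)}}\le 2^{2^{2^{O(l)}}}$ steps. Collecting these and absorbing the constants, $\sA'$ runs in $2^{2^{2^{k(l+1)}}}$ time for a suitable constant $k$.

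I do not expect a substantive obstacle here; the proof is essentially bookkeeping. The two points that need care are (i) carrying the triple-exponential time bound correctly through the quasipolynomial blow-up applied to the doubly-exponential input length $|w|=2^{2^{r_x}}$ with $r_x\le 2^{l+1}$, so that the final bound really has the form $2^{2^{2^{k(l+1)}}}$; and (ii) using that $\sA$ is one-sided (never wrong, only ``I do not know'') so that its $\epsilon$ fraction of correct answers at each length is preserved by $\sA'$ and combines with the majority-vote advice bit to push the correctness above $\tfrac{1}{2}+\tfrac{\epsilon}{2}$ at every length.
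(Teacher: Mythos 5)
Your proof is correct and is the natural (evidently intended) argument: the paper's conference version states Lemma~\ref{lm:3_3} without proof, and your bijection-plus-majority-advice-bit construction, together with the bookkeeping that quasipolynomial time in $|w|=2^{2^{r_x}}$ with $r_x\le 2^{l+1}-2$ stays within $2^{2^{2^{k(l+1)}}}$, fills it in completely. No gaps.
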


%The proof of this lemma is in the appendix.

We know several results that establish worst-case to average-case
connections for classes such as $\EXP$ and
$\PSPACE$~\cite{Yao82,BFNW93,Imp95,ImpWig97,SudanTrevisanVadhan01}.
The following lemma establishes a similar connection for triple
exponential time classes, and can be proved using known techniques.
%We provide a proof in the appendix.

%%The following important lemma can be proved using classical hardness amplification results \cite{GolLev89,Yao82,BFNW93,Imp95}
%%or using list decoding techniques of error correcting codes. (See \cite{ArBa09} for a good review.)
%%The technique is fairly standard in the area of hardness amplification
%%and the situation is only made easier since we are concerned with deterministic and nondeterministic computation
%%in time $O(2^{2^{2^O(n)}})$.
\begin{lemma}\label{lm:3_4}
If $\NEEE\cap \coNEEE \not\subseteq \EEElog$, then there is language $L$ in $\NEEE\cap\co\NEEE$
such that no $\EEElog$ algorithm can decide $L$, at infinitely many
lengths $n$, on more than $\frac{1}{2}+\frac{1}{n}$ fraction of strings
from $\{0,1\}^n$.
\end{lemma}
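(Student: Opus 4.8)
The plan is to prove Lemma~\ref{lm:3_4} by a worst-case to average-case hardness amplification carried out \emph{inside} $\NEEE\cap\coNEEE$, essentially scaling the standard amplification via (list-)decodable codes --- in the style of Babai--Fortnow--Nisan--Wigderson and Sudan--Trevisan--Vadhan --- up to the triple-exponential level. By the hypothesis $\NEEE\cap\coNEEE\not\subseteq\EEElog$, fix a language $L_0\in\NEEE\cap\coNEEE$ with $L_0\notin\EEElog$; write $N=2^n$ and let $(L_0)_n\in\{0,1\}^N$ be its characteristic string at length $n$. I would take an off-the-shelf binary code family $C$ --- for instance a Reed--Muller code over a field of size $\poly(n)$ concatenated with the Hadamard code --- with: (a) block length $2^{O(n)}$, only polynomially larger than $N$; (b) list-decodability from agreement $\tfrac12+\varepsilon$ for every $\varepsilon\geq 1/n$, with list size $\poly(n)$ and a \emph{deterministic} decoder running in time polynomial in the block length; (c) each individual codeword bit computable from the message in time $\poly(N)=2^{O(n)}$. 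Define the witnessing language $L'$ so that, for a function $n(m)=\Theta(m)$ onto a cofinite set of lengths, $L'\cap\{0,1\}^m$ is exactly the codeword $C((L_0)_{n(m)})$ under a code of block length exactly $2^m$ (spreading and padding as needed so that $L'$ is genuinely nontrivial at every sufficiently large length). Thus every length $m$ of $L'$ carries a full copy of $(L_0)_{n(m)}$, with $m=\Theta(n(m))$.

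The first thing to check is $L'\in\NEEE\cap\coNEEE$. On input $u\in\{0,1\}^m$, recover $n=n(m)$ and the coordinate $j$. An $\NEEE$ machine guesses, for each $x\in\{0,1\}^n$, a bit $b_x$ together with a certificate $w_x$, verifying $w_x$ against the $\NEEE$ machine for $L_0$ when $b_x=1$ and against the $\NEEE$ machine for $\overline{L_0}$ (which exists since $L_0\in\coNEEE$) when $b_x=0$. There are $2^{O(m)}$ certificates, each verifiable in $2^{2^{2^{O(m)}}}$ time; if all check out then $(b_x)_x$ is forced to equal $(L_0)_n$, so the machine deterministically computes the $j$-th bit of $C((L_0)_n)$ in $2^{O(m)}$ further time and accepts iff it is $1$. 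The same machine with the acceptance condition negated witnesses $L'\in\coNEEE$. All resource bounds are $2^{2^{2^{O(m)}}}$, so $L'\in\NEEE\cap\coNEEE$.

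The heart of the argument is the reduction, in the contrapositive. Suppose some $\EEElog$ algorithm $A$, with advice $(\alpha_m)$ of length $O(\log m)$, decided $L'$ on more than a $\tfrac12+\tfrac1m$ fraction of $\{0,1\}^m$ at all but finitely many $m$. To decide $L_0$ at a length $n$: pick an $m$ with $n(m)=n$, so $m=\Theta(n)$; run $A$ (with $\alpha_m$, supplied as advice) on all $2^m=2^{O(n)}$ inputs of length $m$, assembling a word $w\in\{0,1\}^{2^m}$ that agrees with $C((L_0)_n)$ on more than a $\tfrac12+\Omega(1/n)$ fraction of coordinates; this costs $2^{O(n)}$ calls to $A$, each in $2^{2^{2^{O(n)}}}$ time, hence $2^{2^{2^{O(n)}}}$ overall. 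Now run the deterministic list-decoder on the whole received word $w$, in time $\poly(2^{O(n)})=2^{O(n)}$, to get $\poly(n)$ candidate codewords and hence $\poly(n)$ candidate strings, one of which is $(L_0)_n$; with $O(\log n)$ further advice bits selecting the correct candidate, read off $L_0(x)$. This is an $\EEElog$ algorithm for $L_0$ (total time $2^{2^{2^{O(n)}}}$, total advice $O(\log m)+O(\log n)=O(\log n)$), correct at all but finitely many $n$ --- contradicting $L_0\notin\EEElog$. Hence no $\EEElog$ algorithm can decide $L'$ on more than a $\tfrac12+\tfrac1n$ fraction of $\{0,1\}^n$ at infinitely many lengths $n$, which is the statement of the lemma.

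The step I expect to be the main obstacle is keeping the amplified language inside $\NEEE\cap\coNEEE$ rather than in a larger class: this is exactly where $L_0\in\NEEE\cap\coNEEE$, not merely $L_0\in\NEEE$, is used, since it supplies $\NEEE$-certificates for \emph{both} the positive and the negative instances of $L_0$, allowing one nondeterministic triple-exponential machine to reconstruct the entire truth table $(L_0)_n$ and then compute any codeword bit deterministically. The other delicate point is quantifier bookkeeping --- matching the on-average parameter ``$\tfrac12+1/m$'' at length $m$ to a list-decoding radius the code actually achieves (this is why polynomial block-length blow-up, list size $\poly(n)$, and $m=\Theta(n)$ are the parameters we need), arranging the spreading of $(L_0)_n$ across lengths so that $L'$ is hard at every length, and lining up the ``infinitely often'' versus ``almost everywhere'' quantifiers. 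Note that working with a deterministic \emph{global} list-decoder applied to the full received word --- affordable precisely because the block length is only $2^{O(n)}$ while we have $2^{2^{2^{O(n)}}}$ time --- is what lets us avoid derandomizing a local decoder and keep the advice down to $O(\log n)$ bits.
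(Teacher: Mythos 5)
The paper gives no proof of this lemma (it only remarks that it ``can be proved using known techniques''), and your construction --- encoding the length-$n$ truth table of a hard $L_0\in\NEEE\cap\coNEEE$ by a polynomially longer list-decodable code placed at length $m=\Theta(n)$, with membership in $\NEEE\cap\coNEEE$ certified by guessing the entire truth table together with yes- and no-certificates --- is exactly the route the paper gestures at, and those parts are sound. The problem is the quantifier over lengths, which you flag as ``delicate'' but do not resolve, and which is a genuine gap rather than bookkeeping. The lemma asserts almost-everywhere average-case hardness: for every $\EEElog$ algorithm $A$, the set of lengths $n$ at which $A$ is correct on more than a $\tfrac12+\tfrac1n$ fraction is \emph{finite}. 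Its negation is therefore ``some $A$ succeeds at infinitely many lengths,'' whereas your contrapositive starts from the much stronger assumption that $A$ succeeds at all but finitely many lengths. What your argument actually establishes is only the infinitely-often version: every $\EEElog$ algorithm has \emph{infinitely many} bad lengths. The closing sentence of your third paragraph, which promotes this to the statement of the lemma, is a non sequitur.

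The gap is not easily patched within your framework. Since one length of $L'$ carries exactly one truth table $(L_0)_{n(m)}$, an algorithm $A$ that is good at infinitely many (possibly very sparse) lengths $m$ only yields an algorithm for $L_0$ that is correct at the corresponding lengths $n(m)$; that does not contradict $L_0\notin\EEElog$, which is merely somewhere-hardness (equivalently, for each fixed algorithm, failure at infinitely many lengths, since finitely many bad lengths can be hardwired into the machine). Lengthwise amplification preserves the quantifier over lengths, so an infinitely-often worst-case assumption can only give infinitely-often average-case hardness. The standard fix --- letting length $m$ of $L'$ encode the whole initial segment of $L_0$ up to length $n(m)$ --- does not obviously close the gap either: to decide $L_0$ at length $\ell$ one must locate a good length $m$ with $n(m)\ge\ell$, and the least such $m$ can be arbitrarily large relative to $\ell$, exceeding both the $O(\log\ell)$ advice budget and the triple-exponential time budget. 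You should either supply an argument that genuinely produces almost-everywhere hardness, or observe that the infinitely-often conclusion you do prove suffices for the paper's purposes: the proof of Lemma~\ref{lm:3_5} constructs an algorithm achieving a constant advantage at \emph{every} sufficiently large length, which already contradicts infinitely-often average-case hardness.
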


Now we are ready to prove Lemma~\ref{lm:3_5}.

%\begin{proof_lemma_named}{\ref{lm:3_5}}
\begin{proof}[Proof of Lemma \ref{lm:3_5}]
  By Lemma \ref{lm:3_4}, there is a language $L\in (\NEEE\cap \coNEEE)
  - \EEElog$ such that no $\EEElog$ algorithm can decide $L$
  correctly on more than a $\frac{1}{2}+\frac{1}{n}$ fraction of the
  inputs for infinitely many lengths $n$.

Without loss of generality, we can assume that
$L\in\NTIME(2^{2^{2^{n}}})\cap \co\NTIME(2^{2^{2^{n}}})$
Let
\[T_L=\mysetl{0^{2^{2^{r_x}}}}{ x\in L}.\]

Clearly, $T_L \in \NP\cap\co\NP$.
%%and by Lemma \ref{lm:3_3},
%%no $\QP$ algorithm can decides the memberships in $T_L$ correctly on more than
%%$\frac{2}{n}$ fraction of the strings in $\T_n$ and says ``I do not know''
%%otherwise for all but finitely many $n\in\N$.

Define $\tau:\N\rightarrow\N$ such that $\tau(n) =\max\myset{i}{t_i\leq n}$.
Now we will define our Turing complete language.
Let
\[\SAT_0 =\myset{0x}{0^{t_{\tau(|x|)}}\notin T_L \text{ and }x\in\SAT},\]
\[\SAT_1 =\myset{1x}{0^{t_{\tau(|x|)}}\in T_L \text{ and }x\in\SAT}.\]
Let $A = \SAT_0\cup \SAT_1$. Since $L$ is in $\NP \cap \CoNP$, $A$ is in
$\NP$. The following is a Turing reduction from $\SAT$ to $A$: Given
a formula $x$, ask queries $0x$ and $1x$, and accept if and only if at
least one them is in $A$.
Thus $A$ is polynomial-time $2$-$\mathrm{tt}$ complete for $\NP$.

Suppose $A$ is complete via length-increasing, polynomial-bounded,
quasipolynomial-time reductions. Then there is such a reduction $f$ from $\{0\}^*$ to $A$.
There is a constant $d$ such that $f$ is $n^d$-bounded and runs in quasipolynomial time.

The following observation is easy to prove.
\begin{observation}\label{easyobs}
Let $y\in\strings$ and $b\in\{0,1\}$ be such that $f(0^{t_i}) = by$.
Then $0^{t_{\tau(|y|)}}\in T_L$ if and only if $b = 1$.
\end{observation}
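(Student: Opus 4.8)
The plan is to derive the observation directly from the correctness of the reduction $f$ together with the explicit two-part structure of $A$. The key point is that $0^{t_i}$ is a string consisting entirely of zeros, so it always belongs to $\{0\}^*$; since $f$ is a reduction from $\{0\}^*$ to $A$, correctness forces $f(0^{t_i}) \in A$. Because $f$ is length-increasing and $|0^{t_i}| = t_i \geq 2$, the output $f(0^{t_i})$ is a nonempty string, so the decomposition $f(0^{t_i}) = by$ with $b$ its leading bit is well-defined, and we have $by \in A = \SAT_0 \cup \SAT_1$.

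Next I would read off the claim from the definitions of $\SAT_0$ and $\SAT_1$. Every string in $\SAT_0$ begins with the symbol $0$ and every string in $\SAT_1$ begins with $1$, so the two sets are disjoint and membership in $A$ is determined by the leading bit. Hence $by \in A$ splits into exactly two cases according to the value of $b$. If $b = 0$, then $by \in \SAT_0$, and the definition of $\SAT_0$ (applied with the suffix playing the role of the formula $x$, i.e. $x = y$) yields $0^{t_{\tau(|y|)}} \notin T_L$. If $b = 1$, then $by \in \SAT_1$, and the definition of $\SAT_1$ yields $0^{t_{\tau(|y|)}} \in T_L$.

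These two implications are precisely the two directions of the claimed equivalence. For the reverse direction, $b = 1$ immediately gives $0^{t_{\tau(|y|)}} \in T_L$. For the forward direction, suppose $0^{t_{\tau(|y|)}} \in T_L$; then $b = 0$ is impossible, since it would force $0^{t_{\tau(|y|)}} \notin T_L$, so $b = 1$. This establishes $0^{t_{\tau(|y|)}} \in T_L$ if and only if $b = 1$. There is no substantial obstacle here: the observation is a direct unwinding of the definitions, and the only facts requiring notice are that $f(0^{t_i})$ is nonempty (guaranteed by $f$ being length-increasing, so that writing $f(0^{t_i}) = by$ with a single leading bit $b$ is legitimate) and that $\SAT_0$ and $\SAT_1$ are disambiguated by their first symbol, which makes the case split exhaustive and mutually exclusive.
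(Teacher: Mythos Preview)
Your proof is correct and is exactly the argument the paper has in mind; the paper merely states that the observation is ``easy to prove'' and omits the details. Your unwinding of the definitions---using that $0^{t_i}\in\{0\}^*$ forces $f(0^{t_i})\in A$, and then reading off the $T_L$-membership from whether the leading bit places $by$ in $\SAT_0$ or $\SAT_1$---is precisely the intended reasoning.
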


Fix a length $l$. We will describe a quasipolynomial-time
algorithm that will decide the membership in $T_L$ on at least $\frac{1}{\log d}$
fraction of strings from $\T_l$, and says ``I do not know'' on
other strings. By the Lemma \ref{lm:3_3}, this implies
that there is $\EEE/1$ algorithm that decides $L$ on more than
$\frac{1}{2} +\frac{1}{2\log d}$ fraction of strings from $\{0,1\}^l$. This contradicts
the hardness of $L$ and completes the proof.

Let $s = 2^l -1$ and $r = 2^{l+1}-2$.
Recall that $\T_l = \myset{0^{t_i}}{s \leq i \leq r}$. Divide $\T_l$ in
sets $T_0, T_2, \cdots T_r$ where $T_k = \myset{0^{t_i}}{s+k\log d \leq
r+(k+1)\log d}$. This gives at least $\frac{2^l}{\log d}$ sets.
Consider the following algorithm that decides $T_L$ on strings from
$\T_l$: Let $0^{t_j}$ be the input. Say, it lies in the set $T_k$.
Compute $f(0^{t_{s+k\log d}}) = by$.
If $t_{\tau(|y|)} \neq t_j$, then output ``I do not know''. Otherwise, accept $0^{t_j}$ if and only
if $b = 1$. By Observation \ref{easyobs} this algorithm never errs.
Since $f$ is computable in quasipolynomial time, this algorithm runs in quasipolynomial time.
Finally, observe that $t_{\tau(|y|)}$ lies
between $t_{s+k\log d}$ and $t_{s+(k+1)\log d}$. Thus for every $k$,
$0 \leq k \leq r$, there is at least one string from from $T_k$ on which
the above algorithm correctly decides $T_L$. Thus the above algorithm
correctly decides $T_L$ on at least $\frac{1}{\log d}$ fraction of strings from
$\T_l$, and never errs.
%\end{proof_lemma_named}
\end{proof}

%\bibliographystyle{abbrv}
%\bibliography{main,dim,rbm,dimrelated,random,new}
%\newpage
%\appendix
%\input{tmappendix}
\end{document}